\newcommand{\defn}[1]{\emph{\textbf{{#1}}}}
\renewcommand{\paragraph}[1]{\vspace{.5 cm} \noindent \textbf{#1} }
\newcommand{\mc}{\mathcal}
\newcommand{\nil}{\varnothing}
\newcommand{\interior}[1]{ {\kern0pt#1}^{\mathrm{o}} }
\newcommand{\eps}{\varepsilon}
\newcommand{\setof}[2]{\left\{ #1\; \colon \;#2 \right\}}
\newcommand{\set}[1]{\left\{ #1\right\}}
\newcommand{\C}{\mathbb{C}}
\newcommand{\F}{\mathbb{F}}
\newcommand{\K}{\mathbb{K}}
\newcommand{\N}{\mathbb{N}}
\newcommand{\NP}{\mathsf{NP}}
\newcounter{HALG@line}
\renewcommand{\theHALG@line}{\thealgorithm.\arabic{ALG@line}}
\crefname{equation}{}{} 
\crefname{enumi}{Step}{} 
\newtheorem{theorem}{Theorem}[section]
\newtheorem{lemma}[theorem]{Lemma}
\newtheorem{proposition}[theorem]{Proposition}
\newtheorem{claim}{Claim}[theorem]
\theoremstyle{definition}
\newtheorem{question}[theorem]{Question}
\newtheorem{remark}[theorem]{Remark}
\newtheorem{scheduler}[theorem]{Scheduler}
\setlist{nosep}
\setlist{nolistsep}
\setlist[enumerate]{leftmargin=*,parsep=0pt,label={\rm \alph*)}}
\setlist[itemize]{leftmargin=*,parsep=0pt}
\begin{document}

\newcommand{\B}{\mathsf{B}}
\newcommand{\alg}{\mathsf{alg}}
\newcommand{\opt}{\mathsf{opt}}
\newcommand{\tgr}{\mathsf{ins}}
\renewcommand{\K}{\mathsf{K}}
\renewcommand{\C}{\mathsf{C}}
\newcommand{\can}{\mathsf{nev}}
\newcommand{\eve}{\mathsf{eve}}
\newcommand{\tilC}{\widetilde{\C}}

\newcommand{\J}{{\mathcal{F}_\mathsf{big}}}
\newcommand{\A}{\mathcal{A}}
\renewcommand{\F}{\mathcal{F}}
\renewcommand{\S}{{\mathcal{F}_\mathsf{small}}}

\title{When to Give Up on a Parallel Implementation}
\author{Nathan S. Sheffield, Alek Westover}
\date{}
\maketitle
\abstract{
In the Serial Parallel Decision Problem (SPDP), introduced by Kuszmaul and Westover
[SPAA'24], an algorithm receives a series of tasks online, and must choose for each between a serial implementation and a parallelizable (but less efficient) implementation.
Kuszmaul and Westover describe three decision models: 
(1) \defn{Instantly-committing} schedulers must decide on arrival, irrevocably, which
implementation of the task to run.
(2) \defn{Eventually-committing} schedulers can delay their decision beyond a
task's arrival time, but cannot revoke their decision once made. 
(3) \defn{Never-committing} schedulers are always free to abandon their progress on the task and
start over using a different implementation. Kuszmaul and Westover gave a simple instantly-committing scheduler whose total completion time is $3$-competitive with the offline optimal schedule, and proved two lower bounds: no eventually-committing scheduler can have competitive ratio better than $\phi
\approx 1.618$ in general, and no instantly-committing scheduler can have
competitive ratio better than $2$ in general. They conjectured that the three decision models should admit different competitive ratios, but left upper bounds below $3$ in any model as an open problem.\\

In this paper, we show that the powers of instantly, eventually, and never
committing schedulers are distinct, at least in the ``massively parallel
regime''. The massively parallel regime of the SPDP is the special case
where the number of available processors is asymptotically larger than the number of tasks to process, meaning that the \emph{work} associated with running a task in serial is negligible compared to its \emph{runtime}.
In this regime, we show 
(1) The optimal competitive ratio for instantly-committing schedulers is $2$, 
(2) The optimal competitive ratio for eventually-committing schedulers lies in
$[1.618, 1.678]$, 
(3) The optimal competitive ratio for never-committing schedulers lies in $[1.366, 1.500]$. 
We additionally show that our instantly-committing scheduler is also $2$-competitive
outside of the massively parallel regime, giving proof-of-concept that results in the massively parallel regime can be translated to hold with fewer processors. 
}
\section{Introduction}
\subsection{Background}

Many computational tasks can be performed quickly in parallel over a large number of processors --- but such parallel implementations may be less work-efficient than a serial implementation on a single processor, requiring substantially more total computation time across all machines. When several different tasks must be completed in as little total time as possible, this trade-off between work and time can necessitate running different tasks in different modes: small tasks can be done in serial to save work, while large tasks must be parallelized to prevent their serial runtimes from dominating the overall computation.\\

To formalize this problem, Kuszmaul and Westover introduced the \defn{Serial Parallel Decision Problem (SPDP)}~\cite{ku24}. In their model, each task has exactly two possible implementations: a parallel implementation which can be worked on by multiple machines at once (where the rate of progress on the implementation is proportional to the number of processors assigned to it), and a serial implementation which can only be worked on by a single processor at a time.  If all tasks are available at time $0$, it is easy to efficiently determine the optimal strategy: all jobs with serial completion time smaller than some threshold can be run in serial, and the larger tasks must be run in parallel. The model becomes interesting when previously-unknown tasks are allowed to arrive at arbitrary times, and one wishes to minimize the competitive ratio between the total completion time of an \emph{online} algorithm compared to the offline optimal completion time.\\

Kuszmaul and Westover define three distinct versions of this model, parameterized by the degree to which the online scheduler is able to reverse its decisions. \\

\begin{enumerate}
    \item An \defn{instantly-committing} scheduler must choose an implementation for each task as soon as the task arrives, and is not allowed to revisit this choice.
    \item An \defn{eventually-committing} scheduler may delay choosing an implementation, but must choose one irrevocably before assigning its work to a processor.
    \item A \defn{never-committing} scheduler can, at any time, discard all as-yet completed work on an implementation and re-start the task with the other implementation.\\
\end{enumerate}

The distinction between the eventually- and never-committing models is motivated by potential practical concerns: if a task involves mutating an input in memory, it may not be feasible to cancel an implementation once it begins running. 
Westover and Kuszmaul present an instantly-committing scheduler achieving competitive ratio $3$, and show competitive ratio lower bounds of $2$ and $\phi \approx 1.618$ in the instantly-committing and eventually-committing models, respectively. They conjecture that the ability to delay or cancel choices should allow for more competitive online algorithms, but leave open the problem of finding better competitive ratio upper bounds than $3$.

\subsection{This Work}

In this work, we consider Kuszmaul and Westover's SPDP when the number of available processors is much larger than the number of tasks, noting that all of their upper and lower bounds hold in this parameter regime. This is a particularly simple setting, since the work associated to a serial implementation is now negligible compared to its completion time --- running a task in serial means accepting a lower bound on completion time, but requires essentially no work. We can think of this setting as an unrelated-machines scheduling problem with an unlimited number of identical ``slow'' machines, and a single unrelated ``fast'' machine, representing a massively parallel implementation of the task across many processors --- note that this could also describe scenarios with a literal fast machine, such as a single piece of accelerated hardware.\\

Our main results are tight bounds on the competitive ratio of
instantly-committing schedulers in this regime, and separations between the
strength of all 3 models. Our results are summarized in \cref{tab:result}.

\begin{table}[h!]
\centering
\begin{tabular}{|c|c|c|}
\hline
\textbf{Model}                   & \textbf{Lower Bound} & \textbf{Upper Bound} \\ \hline
Instantly-Committing Schedulers  & 2 \cite{ku24}     & \textbf{2} *  \\ \hline
Eventually-Committing Schedulers & 1.618 \cite{ku24}  & \textbf{1.678} *  \\ \hline
Never-Committing Schedulers      & \textbf{1.366} *      & \textbf{1.5} * \\ \hline
\end{tabular}
\caption{Main Results. * = this work.}
\label{tab:result}
\end{table}

In each case, the upper bound comes from a simple heuristic in which the
algorithm compares its projected completion time to its current estimate of the
optimal completion time. More precisely, at each time $t$, our scheduler
computes an offline optimal strategy ``$\opt(t)$'' on the truncation of the task
sequence to tasks that arrive before time $t$, and makes decisions based on the
completion time of $\opt(t)$. \\

Our main technical contribution is the analysis of the schedulers. Working with
$\opt(t)$ is challenging, because the schedules $\opt(t)$ and $\opt(t')$ can be
quite different for $t\neq t'$.
For instantly-committing schedulers we use an invariant-based approach to bound, at all times,
the work taken by our scheduler in terms of the minimum work and completion time among all schedulers.
For eventually- and never-committing schedulers this approach is no longer
feasible: there is no well defined notion of the ``work taken'' by our
scheduler, because the scheduler may not have committed to a decision yet.
Instead, these analyses rely on choosing a couple of critical times to observe
the state of our scheduler and $\opt(t)$, and then establishing a dichotomy: 
either (1) ``real fast tasks'' (tasks that $\opt$ runs on the fast machine)
arrive quickly, in which case our scheduler prioritizes real fast tasks on
the fast machine and will run most other tasks on slow machines, or (2) real
fast tasks arrive slowly, in which case our scheduler never falls too far behind
$\opt$, despite making suboptimal use of the fast machine.\\

In addition to these results, we show that with some effort our instantly-committing scheduler can be adapted to work for any number of processors, fully resolving the question of the optimal competitive ratio of instantly-committing schedulers in the general SPDP, and giving a proof-of-concept that results for a large number of processors can be adapted to hold when the work associated with serial tasks is also a concern.


\subsection{Related Work}
There is a long line of work studying the phenomenon of work-inefficient parallel implementations in multi-processor scheduling. Typically, the models of limited parallelism considered involve one of three types of jobs: \\

\begin{enumerate}
    \item \defn{Rigid} jobs, which come with a number $p_i$ specifying a fixed number of processors the job must be run on at each timestep of its execution.
    \item \defn{Moldable} jobs, where the scheduler may choose the (fixed) number of processors the job is run on, and the amount of work scales depending on this choice according to some \defn{speedup curve}.
    \item \defn{Malleable} jobs, which like moldable jobs have an associated speedup curve, but where the job may be assigned to different numbers of processors at different timesteps (as opposed to the scheduler choosing a fixed value at the start of the task's runtime).\\
\end{enumerate}

In each of these cases, there is interest in minimizing the total completion time (\defn{makespan}) in both the offline setting --- where problems tend to be $\NP$-hard, but may have approximation algorithms~\cite{Turek92, Mounie99, Tiwari94, Turek94_rigid, Turek94_moldable} --- and the online setting, where the goal is to minimize competitive ratio~\cite{Graham69, DuttonMao07, baker1983shelf, HurnikPaulus08, GuoKang10, Ye09,YeChen18}. Kuszmaul and Westover's Serial Parallel Decision Problem is related to this line of work, but doesn't quite fit into the usual framework --- in their model, instead of dealing with an arbitrary speedup curve, there is a single binary decision between a completely serial and perfectly parallelizable implementation.\\

As noted, the massively-parallel regime of the SPDP considered in this paper can be naturally viewed as a scheduling problem with an unlimited number of identical ``slow'' machines, and a single unrelated ``fast'' machine. Standard scheduling problems in the unrelated machines model have also been well-studied, in terms of both offline approximation algorithms and hardness results~\cite{horowitz76exact, lenstra1990approximation, ss-randomized-rounding, two-types-unrelated, molinaro19stochastic, page20intersection, gupta2021stochastic, im2023improved, deng2023generalized}, and online algorithms~\cite{awerbuch1995load, aspnes1997line, ss-randomized-rounding, caragiannis2008better, anand2011meeting, gupta2017stochastic, gupta2020greed, zhang2022randomized}. We note, however, that since we treat ``slow'' machines as an unbounded resource, and there is only a single fast machine available, most of the typical difficulties of multi-processor scheduling problems do not arise. In particular, unlike a typical load-balancing problem where $\NP$-hardness follows from a standard set-partition reduction, the One-Fast-Many-Slow Decision Problem (without dependencies) is easily solvable offline, simply by putting all tasks which finish below a certain threshold on distinct slow machines.

\subsection{Open Questions}
We leave three main open questions as directions for future work.
\begin{question}
What are the optimal competitive ratios for eventually/never-committing schedulers?
\end{question}

In \cref{appendix} we identify barriers, showing that improving on our
eventually/never-committing schedulers will require substantially different
algorithms --- but we suspect that such improvements may be possible.\\

\begin{question}
Are randomized schedulers more powerful than deterministic schedulers?
\end{question}
In the main body of the paper we consider only deterministic schedulers;
however, for many online problems randomized algorithms can do substantially
better than deterministic ones. In \cref{sec:lower} we give some lower bounds
against randomized schedulers, but these bounds are weaker than those known for deterministic schedulers.\\

\begin{question}
Is there a general transformation between schedulers for the massively parallel
regime (i.e. the One-Fast-Many-Slow Decision Problem) and the general SPDP?
\end{question}

The fundamental difficulty of the SPDP is deciding between implementations which take a lot of work, and implementations which take a lot of time. This tradeoff is absolute in the massively parallel regime, since the large number of processors means the amount of work associated with a serial implementation is negligible, whereas in the general SPDP it is possible for all processors to be saturated with serial implementations to run. Intuitively, one might expect that having work associated to the serial implementations only makes the problem \emph{easier}, since it makes the tradeoff less dramatic --- indeed, Kuszmaul and Westover's competitive ratio lower bounds become weaker when the number of processors is small. So, one might hope that algorithms in the massively parallel regime can be generically translated to limited-processor settings. Formalizing this connection is an interesting direction for future research.

\section{Preliminaries}
\subsection{The One-Fast-Many-Slow Decision Problem}

In this section we formally define the \defn{One-Fast-Many-Slow Decision Problem}, where the goal is to distribute work between a single \defn{fast machine} and an unlimited number of \defn{slow machines}. An instance of the problem is a \defn{Task Arrival Process (TAP)} $\mc{T} = (\tau_1, \dots, \tau_n)$, where each task $\tau_i$ consists of a tuple $(\sigma_i, \pi_i, t_i)$ indicating runtime on a slow machine, runtime on the fast machine, and arrival time, respectively, such that $t_1 \leq \dots \leq t_n$. A valid schedule associates at most one task to each machine at each point in time\footnote{In order for the notions like ``amount of work performed on $\tau_i$'' to be well-defined, we must additionally mandate that a schedule be measurable. Alternatively, one can assume that time is discretized into appropriately fine timesteps.} such that no work is done on any task before its arrival time, each task runs on at most 1 machine, and each task $\tau_i$ is either run for a total of $\sigma_i$ time on some slow machine, or a total of $\pi_i$ time on the fast machine. The \defn{completion time} (also known as \defn{makespan}) of the schedule is the time when the last task is finished.\\

We will be interested in online algorithms for this problem. An online scheduler learns about each task only at its arrival time, and at each time $t$ must already have fixed the prefix of the schedule on times less than $t$. We define three distinct models for how these online decisions are made: \\
\begin{enumerate}
    \item For each task $\tau$, an \defn{instantly-committing} scheduler must fix at $\tau$'s arrival time the machine that $\tau$ will run on.
    \item An \defn{eventually-committing} scheduler need not fix a machine for any task until that task begins running.
    \item A \defn{never-committing} scheduler is an eventually-committing scheduler with the additional power to, at any time, ``cancel'' a task from the schedule, erasing all work previously done on the task and allowing it to be re-assigned to a new machine.\\
\end{enumerate}

In each case, we are interested in minimizing the \defn{competitive ratio} of an online scheduler, which is the supremum over all TAPs of the ratio of the online scheduler's completion time to the completion time of an optimal scheduler on that TAP.


\subsection{Connection to the SPDP}

In Kuszmaul and Westover's Serial Parallel Decision Problem, a scheduler must allocate work to $p$ equally-powerful processors, where each task is specified by the work of the serial implementation, the work of the parallel implementation, and the runtime. The scheduler must choose whether to run each task in serial or parallel, and then must assign the resulting work to the $p$ processors, where parallel work can run on multiple processors at once but serial work cannot. \\

We can define the \defn{massively parallel} regime of this problem to be the limit as the number of processors becomes large compared to the number of tasks. Letting $n$ be the number of tasks, if $n \leq \varepsilon p$ then restricting the serial implementations to run on only the first $n$ many processors, and the parallel implementations to run on only the last $p-n$ many processors, the completion time can increase by at most a $\frac{1}{1-\varepsilon}$ factor. This corresponds directly to the One-Fast-Many-Slow Decision Problem: we think of each of these serial processors as a ``slow machine'', noting that since we have as many as we have tasks there are effectively an unlimited number. We think of the parallel processors collectively as a ``fast machine'', noting that we can assume without loss of generality that, at any point in time, all parallel processors are running the same parallel implementation.

\subsection{Notation}

We now introduce our notation for describing and analyzing schedulers. For algorithm $\alg$ and TAP $\mc{T}$, we let $\C_\alg^{\mc{T}}$ be the
completion time of $\alg$ on $\mc{T}$. Let $\mc{T}^{t}$ be the truncation of TAP $\mc{T}$ consisting of the tasks $\tau_i$ with $t_i\le t$. When $\mc{T}$ is clear from context we will write $\C_\alg^{t}$ to denote $\C_\alg^{\mc{T}^{t}}$, and we will write $\C_\alg$ to denote $\C_{\alg}^\infty = \C_\alg^{\mc{T}}$. We will also use $\widetilde{\C}_\alg$ to denote the completion time of the fast machine --- that is, the final time when the fast machine has work.\\

It will be useful to be able to talk about the optimal completion time of a prefix of the TAP. Define the schedule $\opt(t)$ to be a schedule for $\mc{T}^{t}$ with minimal completion time. Note that $\opt(t)$ is only defined as an offline strategy, but that an online algorithm can compute it at time $t$, thus obtaining a lower bound on $\C_\opt$, which will be useful to inform the algorithm's future decisions. For ease of notation, we'll often abbreviate $\C_{\opt(t)}^{t}$ as $\C^t$.\\

There may be many sets of decisions which result in the optimal completion time; as opposed to letting $\opt(t)$ be an arbitrary such scheduler, it will be useful to fix a canonical one, which we will do by letting $\opt(t)$ run as many tasks in serial as possible.

\begin{scheduler}\label{alg:opt} The scheduler $\opt(t)$, defined on
$\mc{T}^{t}$, makes decisions as
follows:
\begin{itemize}
\item If $\tau_i$ has $\sigma_i+t_i \le \C^{t}$, run $\tau_i$ on a
slow machine when it arrives.
\item Otherwise, run $\tau_i$ on the fast machine. Prioritize tasks with larger $\sigma_i+t_i$, and break ties by taking tasks with smaller $i$.\\
\end{itemize}
\end{scheduler}

Finally, we let $[n]=\{1, \dots, n\}$, and for a set $J$ of tasks we will write $\pi_J$ to denote $\sum_{j \in J} \pi_j$. 

\section{A $2$-Competitive Instantly-Committing Scheduler}
In this section we present and analyze a $2$-competitive instantly-committing scheduler. Kuszmaul and Westover showed that a competitive ratio of $(2-\eps)$ is impossible for instantly-committing schedulers, so our scheduler is optimal.
The scheduler, which we call $\tgr$ is defined in \cref{alg:2gr}.

\begin{scheduler}\label{alg:2gr} When task $\tau_i$ arrives:
\begin{itemize}
\item If $\sigma_i+t_i > 2\C^{t_i}$ run $\tau_i$ on the fast machine, with the fast machine processing tasks in order of arrival.
\item Otherwise run $\tau_i$ on a slow machine.\\
\end{itemize}
\end{scheduler}

We analyze $\tgr$ by showing inductively that $\widetilde{\C}_\tgr$ (the completion time of the fast machine) is small compared to the work and completion time of \emph{any} other schedule. For length $n$ TAP $\mc{T}$, scheduler $\alg$, and $i\in [n]$, we define the quantity $\K_\alg^{t}$ to be the sum of $\pi_j$ for all tasks $\tau_j\in \mc{T}^{t}$ that $\alg$ runs on the fast machine. 
The key to analyzing $\tgr$ is the following lemma.

\begin{lemma}\label{lem:tgrinvar}
Fix a length $n$ TAP. For all $i\in [n]$, and for all instantly-committing schedulers $\alg$,  
\begin{equation}\label{eq:tgrinvar}
\widetilde{\C}_\tgr^{t_i} \le \C_\alg^{t_i} + \K_\alg^{t_i}.
\end{equation}
\end{lemma}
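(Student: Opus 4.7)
My plan is to prove the lemma by strong induction on $i$. For the base case $i=1$: if $\tgr$ places $\tau_1$ on the fast machine, the decision rule $\sigma_1+t_1>2\C^{t_1}$ combined with $\C^{t_1}=t_1+\min(\pi_1,\sigma_1)$ forces $\pi_1<\sigma_1$, and hence $\widetilde{\C}_\tgr^{t_1}=t_1+\pi_1=\C^{t_1}\le\C_\alg^{t_1}$; otherwise $\widetilde{\C}_\tgr^{t_1}=0$ and the inequality is trivial. For the inductive step I condition on $\tgr$'s decision on $\tau_i$. If $\tgr$ places $\tau_i$ on a slow machine, then $\widetilde{\C}_\tgr^{t_i}=\widetilde{\C}_\tgr^{t_{i-1}}$, and the inductive hypothesis together with the monotonicity $\C_\alg^{t_{i-1}}+\K_\alg^{t_{i-1}}\le\C_\alg^{t_i}+\K_\alg^{t_i}$ closes the case.

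If $\tgr$ places $\tau_i$ on the fast machine, then because the fast machine is FIFO, $\widetilde{\C}_\tgr^{t_i}=\max(\widetilde{\C}_\tgr^{t_{i-1}},t_i)+\pi_i$, and the decision rule implies $\pi_i\le\sigma_i$ (otherwise $\opt$'s mandatory cost for $\tau_i$ alone would be $t_i+\sigma_i$, contradicting $\sigma_i+t_i>2\C^{t_i}\ge 2(t_i+\sigma_i)$). I then split on $\alg$'s decision. If $\alg$ also places $\tau_i$ on the fast machine, then $\K_\alg^{t_i}=\K_\alg^{t_{i-1}}+\pi_i$ and $\C_\alg^{t_i}\ge t_i$; combining with the inductive hypothesis yields $\max(\widetilde{\C}_\tgr^{t_{i-1}},t_i)\le\C_\alg^{t_i}+\K_\alg^{t_i}-\pi_i$, from which the desired bound follows by adding $\pi_i$ to both sides.

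The crux is when $\alg$ places $\tau_i$ on a slow machine, so that $\C_\alg^{t_i}=\max(\C_\alg^{t_{i-1}},t_i+\sigma_i)$ and $\K_\alg^{t_i}=\K_\alg^{t_{i-1}}$. If $\widetilde{\C}_\tgr^{t_{i-1}}\le t_i$ then $\widetilde{\C}_\tgr^{t_i}=t_i+\pi_i\le t_i+\sigma_i\le\C_\alg^{t_i}$, and we are done. Otherwise $\widetilde{\C}_\tgr^{t_{i-1}}>t_i$, and I would apply the inductive hypothesis at $t_{i-1}$ both to $\alg$ itself and to $\opt(t_{i-1})$: the latter yields $\widetilde{\C}_\tgr^{t_{i-1}}\le\C^{t_{i-1}}+\K_{\opt(t_{i-1})}^{t_{i-1}}\le 2\C^{t_{i-1}}\le 2\C^{t_i}<\sigma_i+t_i$, so $\tgr$'s fast backlog cannot exceed $\alg$'s mandatory serial cost of $\tau_i$. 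Combining these bounds and splitting on which argument dominates $\max(\C_\alg^{t_{i-1}},t_i+\sigma_i)$ should close this case.

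The main obstacle I anticipate is precisely this final subcase, with $\widetilde{\C}_\tgr^{t_{i-1}}>t_i$ and $\C_\alg^{t_{i-1}}\ge t_i+\sigma_i$ simultaneously: naively adding $\pi_i$ to both sides of the inductive hypothesis loses by an additive $\pi_i$, so the argument must exploit the gap between $\widetilde{\C}_\tgr^{t_{i-1}}$ (bounded above by $2\C^{t_i}<\sigma_i+t_i$) and $\C_\alg^{t_{i-1}}\ge t_i+\sigma_i$, possibly together with any fast work $\K_\alg^{t_{i-1}}$ that $\alg$ has accumulated, to absorb $\pi_i$. Tightening this bookkeeping between $\tgr$'s accumulated fast backlog and $\alg$'s current state is where I expect most of the technical work of the proof to lie.
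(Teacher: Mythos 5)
Your base case and the two easy inductive cases ($\tgr$ serializing $\tau_i$, or both $\tgr$ and $\alg$ running $\tau_i$ fast) are correct and essentially identical to the paper's, as is the observation $\pi_i<\sigma_i$. But the crux case --- $\tgr$ runs $\tau_i$ on the fast machine while $\alg$ runs it on a slow machine --- is genuinely not closed, as you yourself acknowledge, and the bookkeeping you hope will rescue it does not exist in the form you describe. The missing idea is \emph{which} optimal schedule to feed into the inductive hypothesis: apply it to $\opt(t_i)$, the canonical optimum of the TAP truncated at the \emph{new} arrival time, rather than to $\opt(t_{i-1})$. Because the decision rule of \cref{alg:2gr} gives $\sigma_i+t_i>2\C^{t_i}$, the schedule $\opt(t_i)$ must itself run $\tau_i$ on the fast machine, so its total fast work already accounts for $\pi_i$: we have $\K_{\opt(t_i)}^{t_{i-1}}+\pi_i=\K_{\opt(t_i)}^{t_i}\le\C^{t_i}$ and $\C_{\opt(t_i)}^{t_{i-1}}\le\C^{t_i}$. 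Hence $\widetilde{\C}_\tgr^{t_i}\le\widetilde{\C}_\tgr^{t_{i-1}}+\pi_i\le\C_{\opt(t_i)}^{t_{i-1}}+\K_{\opt(t_i)}^{t_{i-1}}+\pi_i\le 2\C^{t_i}<\sigma_i+t_i\le\C_\alg^{t_i}$, which closes the case outright (the $\K_\alg^{t_i}$ term is not even needed). Note this requires the inductive hypothesis to be quantified over all schedulers, so that it can be invoked for $\opt(t_i)$ via its arrival-time decisions on the prefix --- which is exactly how the paper sets up the induction.

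By contrast, the route you sketch cannot be completed from the inequalities you list. Bounding $\widetilde{\C}_\tgr^{t_{i-1}}\le 2\C^{t_{i-1}}$ via $\opt(t_{i-1})$ leaves the additive $\pi_i$ dangling: $\pi_i$ can be nearly as large as $\C^{t_i}-t_i$, while the slack $(\sigma_i+t_i)-2\C^{t_i}$ can be arbitrarily small and $\K_\alg^{t_{i-1}}$ can be zero, so the quantities you propose to exploit (the gap between $\widetilde{\C}_\tgr^{t_{i-1}}$ and $\C_\alg^{t_{i-1}}$, plus $\alg$'s accumulated fast work) are consistent with $\widetilde{\C}_\tgr^{t_{i-1}}+\pi_i$ exceeding your upper bounds' guarantee on $\C_\alg^{t_i}+\K_\alg^{t_i}$. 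The repair is not tighter bookkeeping but the substitution of $\opt(t_i)$ for $\opt(t_{i-1})$ in the inductive hypothesis, so that $\pi_i$ is absorbed into $\K_{\opt(t_i)}^{t_i}\le\C^{t_i}$ before you ever compare against $\alg$.
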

\begin{proof}
We prove the lemma by induction on $i$.
For $i=1$ the claim is trivial.
Now, fix $i \in [n-1], \alg$ and assume the lemma for $i$ and for all $\alg'$;
we will prove the lemma for $i+1, \alg$.\\

If $\tgr$ runs $\tau_{i+1}$ on a slow machine then
$\widetilde{\C}_\tgr^{t_{i+1}}=\widetilde{\C}_{\tgr}^{t_i}$, 
and $\C_\alg^{t_i}+\K_\alg^{i} \le \C_\alg^{t_{i+1}}+\K_\alg^{i+1}$. 
Thus, the invariant \cref{eq:tgrinvar} is maintained. 
We always have $\widetilde{\C}^{t_{i+1}}_\tgr \le \widetilde{\C}^{t_{i}}_\tgr + \pi_{i+1}$, so if $\alg$ runs $\tau_{i+1}$ on the fast machine then the invariant \cref{eq:tgrinvar} is also maintained, since then $\C_\alg^{t_{i+1}} + \K_\alg^{t_{i+1}} \geq \C_\alg^{t_{i}} + \K_\alg^{t_{i}} + \pi_{i+1} \geq \widetilde{\C}^{t_{i}}_\tgr + \pi_{i+1}$ by the inductive hypothesis.\\

The final case to consider is when $\alg$ runs $\tau_{i+1}$ on a slow machine, while
$\tgr$ runs $\tau_{i+1}$ on the fast machine. 
From the definition \cref{alg:2gr} of $\tgr$, the fact that $\tgr$ ran ran
$\tau_{i+1}$ on a slow machine implies 
\begin{equation}
  \label{eq:thistaskisbig}
2\C^{t_{i+1}} < \sigma_{i+1}+t_{i+1} .
\end{equation}
On the other hand, $\alg$ ran $\tau_{i+1}$ on the fast machine. 
Thus, 
\begin{equation}
  \label{eq:algissuperbig}
\sigma_{i+1}+t_{i+1}\le \C_\alg^{t_{i+1}}.
\end{equation}
Now, we use the invariant for $(i,\opt_{t_{i+1}})$ to bound
$\widetilde{\C}_{\tgr}^{t_{i+1}}$
. We have:
\begin{equation}
  \label{eq:tgrtiplusone}
\widetilde{\C}_\tgr^{t_{i+1}} \le \widetilde{\C}_\tgr^{t_i}  + \pi_{i+1} \le \K_{\opt(t_{i+1})}^{t_i}+\C_{\opt(t_{i+1})}^{t_i} + \pi_{i+1}.
\end{equation}
Because of \cref{eq:thistaskisbig} we know that $\opt(t_{i+1})$ must
run $\tau_{i+1}$ on the fast machine. So, we have
\begin{equation}
  \label{eq:lasttgreqn}
\K_{\opt(t_{i+1})}^{t_i}+\C_{\opt(t_{i+1})}^{t_i} + \pi_{i+1} = \K_{\opt(t_{i+1})}^{t_{i+1}}+\C_{\opt(t_{i+1})}^{t_i} \leq 2 \C^{t_{i+1}}.
\end{equation}
Stringing together the above inequalities \cref{eq:tgrtiplusone}, \cref{eq:lasttgreqn}, \cref{eq:thistaskisbig}, and \cref{eq:algissuperbig}, we get
\[  \widetilde{\C}_\tgr^{i+1} < \C_\alg^{t_{i+1}}. \]
Thus, the invariant \cref{eq:tgrinvar} holds.
\end{proof}

Using \cref{lem:tgrinvar} it is easy to show that $\tgr$ is $2$-competitive.
\begin{theorem}\label{thm:tgr}
$\tgr$ is a $2$-competitive instantly-committing scheduler.
\end{theorem}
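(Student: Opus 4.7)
The plan is to derive the theorem directly from \cref{lem:tgrinvar} with almost no additional work: the invariant was designed precisely to control the fast machine, and the slow machines are controlled immediately by the branching condition in \cref{alg:2gr}. I will separately bound the completion times of the fast machine and the slow machines of $\tgr$, and show each is at most $2\C_\opt$.

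For the slow machines, observe that if $\tgr$ runs $\tau_i$ on a slow machine, then by definition $\sigma_i + t_i \le 2\C^{t_i}$. Since $\C^{t_i} \le \C_\opt$ (the offline optimum over a prefix of the TAP is no worse than the offline optimum on the full TAP), task $\tau_i$ finishes on its slow machine by time $\sigma_i + t_i \le 2 \C_\opt$. So all slow machines complete their work by $2\C_\opt$.

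For the fast machine, I apply \cref{lem:tgrinvar} with $i = n$ and with $\alg$ chosen to be the canonical offline optimum $\opt = \opt(t_n)$, which is an instantly-committing scheduler. This yields
\[
\widetilde{\C}_\tgr^{t_n} \le \C_\opt^{t_n} + \K_\opt^{t_n} = \C_\opt + \K_\opt,
\]
where the equality uses that no tasks arrive after $t_n$. Now the key observation is that $\K_\opt \le \C_\opt$: the total work $\opt$ assigns to its fast machine is a lower bound on when that machine finishes, which is in turn at most the overall completion time $\C_\opt$. Combining these two gives $\widetilde{\C}_\tgr \le 2 \C_\opt$, where again we use that no new tasks arrive after $t_n$ so the fast machine's completion time does not grow beyond $\widetilde{\C}_\tgr^{t_n}$.

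Putting the two bounds together, every machine in $\tgr$'s schedule finishes by $2\C_\opt$, so $\C_\tgr \le 2\C_\opt$, which is the desired competitive ratio. There is no real obstacle beyond noticing the right instantiation of the invariant; the slight subtlety is just remembering that $\opt$ itself qualifies as an instantly-committing scheduler so that the lemma may be applied to it.
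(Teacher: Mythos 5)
Your proof is correct and follows essentially the same route as the paper: instantiate \cref{lem:tgrinvar} at $i=n$ with $\alg=\opt$, use $\K_\opt \le \C_\opt$ to get $\widetilde{\C}_\tgr \le 2\C_\opt$, and bound the slow-machine tasks directly via the condition $\sigma_i+t_i\le 2\C^{t_i}\le 2\C_\opt$. The paper merely leaves the instantiation $\alg=\opt$ and the bound $\K_\opt\le\C_\opt$ implicit, which you spell out.
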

\begin{proof}
By \cref{lem:tgrinvar} we have $\widetilde{\C}_{\tgr}^{t_n} \le 2\C_\opt$.
Thus, $\tgr$ finishes using the fast machine before time $2\C_\opt$.
Any task that $\tgr$ runs on a slow machine must have
$ \sigma_i+t_i \le 2\C_\opt, $
so these tasks finish before $2\C_\opt$ as well.
\end{proof}

\section{A $1.678$-Competitive Eventually-Committing Scheduler}\label{sec:eventual}
In this section we present and analyze a $\xi$-competitive eventually-committing scheduler, where $\xi\approx 1.678$ is the real root of the polynomial $2x^3-3x^2-1$. Kuszmaul and Westover gave a lower bound of $\phi\approx 1.618$ on the competitive ratio of any eventually-committing scheduler and conjectured that this lower bound is tight. Our scheduler represents substantial progress towards resolving Kuszmaul and Westover's conjecture, improving on their previous best algorithm which had a competitive ratio of $3$.
Our scheduler, which we call $\eve$, is defined in \cref{alg:eve}.

\begin{scheduler}\label{alg:eve} At time $t$:
\begin{itemize}
\item If task $\tau_i$, which has arrived but not yet been started, has
$\sigma_i+t \le \xi \C^t$, then start $\tau_i$ on a slow machine.
\item Maintain up to one \defn{active} task at a
time. The fast machine is always allocated to the active task.
\item When there is no active task, but there are unstarted tasks present, choose as the new active task the unstarted task with the largest $\sigma_i+t_i$ value (breaking ties arbitrarily).\\
\end{itemize}
\end{scheduler}

\begin{theorem}
$\eve$ is a $\xi$-competitive eventually-committing scheduler.
\end{theorem}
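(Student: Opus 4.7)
My strategy is a proof by contradiction: assume some TAP $\mc{T}$ has $\C_\eve > \xi \C_\opt$ and derive a contradiction. The first observation is that any task $\tau_j$ that $\eve$ dispatches to a slow machine is started at some time $s \ge t_j$ satisfying $\sigma_j + s \le \xi \C^s \le \xi \C_\opt$ (using $\C^s \le \C_\opt$, which follows by restricting the canonical $\opt(\infty)$ schedule to $\mc{T}^s$), so $\tau_j$ completes by $\xi \C_\opt$. Hence if $\C_\eve$ exceeds $\xi \C_\opt$, the final completed task $\tau_i$ must be one that $\eve$ ran on the fast machine.

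Writing $T = \C_\eve$ and letting $s$ be the time $\tau_i$ became active, we have $T = s + \pi_i$. Let $T^* \le s$ be the beginning of the final busy period of $\eve$'s fast machine, that is, the latest time $\le s$ at which no unstarted task was present (or $0$). Let $J$ be the set of tasks $\eve$ ran on the fast machine during $[T^*, T]$, so $T = T^* + \pi_J$. I would then extract the following structural constraints: (i) every $\tau_j \in J$ is unstarted immediately before its promotion time $s_j$, so $\sigma_j + s_j > \xi \C^{s_j}$; (ii) at $T^*$ some task $\tau_\ell$ arrives with $\sigma_\ell + T^* > \xi \C^{T^*}$, since otherwise the fast machine would remain idle; and (iii) by the priority rule, $\tau_i$ maximizes $\sigma + t$ among tasks unstarted at time $s$.

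The main step is to deploy the dichotomy suggested in the introduction, using $\opt(t)$ at a small number of critical snapshots (plausibly $T^*$ and $s$, or $T^*$ and $t_i$) to compare $\eve$ against. Let $F^* = \{ j : \opt \text{ runs } \tau_j \text{ on the fast machine}\}$ denote the ``real fast tasks,'' which satisfy $\sigma_j + t_j > \C_\opt$. In the case where enough mass of $F^*$ has arrived by the critical time, the truncated optimum $\C^t$ is already close to $\C_\opt$, so the slow-machine threshold $\xi \C^t$ is large and forces most non-$F^*$ tasks onto slow machines; this essentially aligns $J$ with $F^*$, and the fast-machine work $\pi_J$ can be bounded against $\opt$'s fast-machine work. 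In the complementary case where real fast tasks arrive late, some tasks in $J$ are ``real slow'' with $\sigma_j + t_j \le \C_\opt$, and combining this with the constraint $\sigma_j + s_j > \xi \C^{s_j}$ bounds how much $s_j - t_j$ and $\pi_j$ can grow before the genuinely-fast tasks eventually arrive and push $\C^t$ up.

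The principal obstacle I anticipate is choosing the critical times and intermediate estimates so that the two case bounds combine precisely into the cubic relation $2\xi^3 = 3\xi^2 + 1$. Unlike the invariant-based analysis of $\tgr$, there is no natural running quantity to track for $\eve$: until a task is actually started, we do not know where $\eve$ will eventually place it, so the proof cannot rely on a per-task invariant and must instead stitch together snapshot comparisons. A second subtlety is that $\opt(t)$ evolves non-monotonically in $t$ -- tasks may migrate between the slow and fast partitions of the canonical $\opt$ as new tasks arrive -- so comparisons across times must go through the safe inequality $\C^t \le \C_\opt$ and the offline optimality of $\opt(t)$ on $\mc{T}^t$ rather than through any direct structural identification of tasks. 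The cubic threshold suggests that the tight extremal example balances three quantities (roughly $T^*$, $\pi_{J \setminus \{i\}}$, and $\pi_i$), and matching this with the algorithm's threshold $\xi$ in both regimes should be where the delicate calculation lives.
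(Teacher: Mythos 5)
Your opening reduction is correct and matches the paper's first step: any task $\eve$ starts on a slow machine at time $t$ finishes by $\xi\C^t \le \xi\C_\opt$, so only the fast-machine completion time needs to be bounded. But from there on, what you have written is a plan rather than a proof. You name a dichotomy (``real fast tasks arrive early'' versus ``late''), but you never fix the critical times, never state the quantitative bound that holds in either case, and you explicitly defer the ``delicate calculation'' that is supposed to produce $2\xi^3 = 3\xi^2 + 1$. That calculation \emph{is} the theorem: the defining cubic of $\xi$ never enters your argument, so nothing in the sketch as written distinguishes $\xi$ from, say, $\phi$ --- and the paper's appendix shows that a $\phi$-competitive analysis of a non-procrastinating scheduler like $\eve$ is impossible, so the missing step cannot be routine.

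Concretely, the paper's proof runs through machinery your sketch does not supply. It defines $R(x)=\inf\setof{t}{\C^t\ge x}$, proves that every ``fake'' task (one $\eve$ puts on the fast machine but $\opt$ runs slow) arrives before $R(\C_\opt/\xi)$ and has $\pi_i<\sigma_i/\xi$, and splits fake tasks by whether $\sigma_i+t_i$ exceeds $\C_\opt/\xi$. The crux is the \emph{stuck task}: a small fake task that may already be active on the fast machine at time $R(\C_\opt/\xi)$ and, because $\eve$ is eventually-committing, cannot be cancelled. Handling it (the paper's hardest case) requires introducing $\sigma_{\min}$, the smallest slow runtime among big fake tasks arriving after the stuck task started, and deriving $t_\star+\pi_{\J'}+\pi_{\A_{\text{early}}}<(\sigma_{\min}+R(\sigma_{\min}))/\xi$ together with $R(\sigma_{\min})>(\xi-1)\sigma_{\min}$ from the fact that $\eve$ declines to start that task on a slow machine; these combine to exactly $(3+1/\xi^2-\xi)\C_\opt=\xi\C_\opt$. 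Your two cases do not isolate this obstruction, and your Case-1 assertion that a large $\C^t$ ``forces most non-$F^*$ tasks onto slow machines'' is not true as stated: a task already active on the fast machine cannot be moved, and a fake task with $\sigma_i+t_i$ near $\C_\opt$ that arrived early may never satisfy $\sigma_i+t\le\xi\C^t$ at any later time --- which is precisely why the paper needs its ``junk-free-zone'' claims. So there is a genuine gap: the heart of the argument is missing.
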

\begin{proof}
Fix TAP $\mc{T}$.
Let $\tilC_\eve$ denote the time when $\eve$ completes the last task run on the
fast machine.
If $\tau_i$ is run on a slow machine at any time $t$, then $\tau_i$ finishes before
$\xi \C^t \le \xi\C_\opt$. Thus, it suffices to show that $\tilC_\eve\le \xi\C_\opt.$\\

For any $x\in [0,\C_\opt]$, let $R(x)$ be the first time that an online algorithm becomes aware that the optimal schedule requires at least $x$ completion time --- that is, $R(x) = \inf\setof{t}{\C^{t}\ge x}$.
Let $\A$ (``actual'') be the set of tasks that $\opt$ runs on the fast machine, and $\F$ (``fake'') be the set of tasks that $\opt$ runs on a slow machine but $\eve$ runs on the fast machine. We can bound the sizes and arrival times of tasks in $\F$ as follows.

\begin{claim}\label{clm:junk-comes-early}
All tasks $\tau_i\in \F$ arrive before time $R(\C_\opt/\xi)$, and have
$\pi_i < \sigma_i/\xi$.
\end{claim}
\begin{proof}
All tasks $\tau_i \in \F$ are run on the fast machine by $\eve$, and on slow
machines by $\opt$. 
In particular this means
\[ \xi\C^{t_i}< \sigma_i+t_i \le \C_\opt \le \xi \C^{R(\C_\opt/\xi)}. \] 
Thus, $t_i< R(\C_\opt/\xi)$. To show $\pi_i < \sigma_i/\xi$, note that $ \pi_i+t_i \le \C^{t_i} < \frac{\sigma_i + t_i}{\xi}$. \\
\end{proof}

To analyze when tasks in $\F$ get run it will be useful to partition $\F$ into $\J = \{\tau_i \in \F \colon \sigma_i + t_i > \C_\opt / \xi\}$ and $\S = \{\tau_i \in \F \colon \sigma_i + t_i \leq \C_\opt / \xi\}$. 
Now we show that, without loss of generality, $\eve$ does not start any tasks in $\S$ too late.

\begin{claim}\label{clm:junk-free-zone}
If $\eve$ starts a task $\tau\in \S$ at any time $t\ge R(\C_\opt/\xi)$, then $\tilC_\eve\le \xi \C_\opt$.
\end{claim}
\begin{proof}
Note that no task $\tau_i \in \S$ can be started after time $\C_{\opt}$: since
$\C_{\opt} + \sigma_i \le \C_\opt + \C_\opt / \xi < \xi \C_\opt$, any task
$\tau_i \in \S$ present but not already running at time $\C_\opt$ would be run
on a slow machine. Let $t_*$ be the last time after $R(\C_\opt/\xi)$ when $\eve$
starts a task $\tau\in\F$. If $t_*$ does not exist the claim is vacuously true.
In light of our previous observation, $t_*< \C_\opt$.
Let $\tau_i$ be the task that $\eve$ starts at time $t_*$.
Because $\eve$ prioritizes making tasks with larger $\sigma_j+t_j$ values
active, at time $t_*$ there are no tasks $\tau\in \J\cup \A$ present.
After time $t_*$, no more tasks from $\F$ can arrive by \cref{clm:junk-comes-early}, and at most $\C_\opt - t_*$ work in $\A$ can arrive because $\opt$ must be able to complete this work.
Thus, 
\begin{equation}\label{eveeveveveeveve}
\tilC_\eve \le t_* + (\C_\opt - t_*) + \pi_i = \C_\opt + \pi_i. 
\end{equation}
Now, because $\tau_i\in \S$ we have $\pi_i\le \C_\opt/\xi^2$; using this in \cref{eveeveveveeveve} we find $\tilC_\eve\le (1+1/\xi^2)\C_\opt\le \xi\C_\opt$.
\end{proof}
This means that we can assume that, after time $R(\C_\opt/\xi)$, the only tasks that $\eve$ runs on the fast machine are $\A$, $\J$, and whatever the active task was at time $R(\C_\opt/\xi)$. We call the active task at time $R(\C_\opt/\xi)$, if one exists, the \defn{stuck} task, denoted $\tau_s$. 
We split into cases depending on how large this stuck task is.

\paragraph{Case 1:} There is no stuck task.\\

In this case, we in fact have $\widetilde{C}_\eve \le \C_\opt$. Since there is no active task at time $R(\C_\opt/\xi)$, there are no tasks present but not started on slow machines. By \cref{clm:junk-comes-early}, $\eve$ will run all tasks $\tau \not\in \A$ arriving after time $R(\C_\opt/\xi)$ on slow machines. Thus, at all time steps $t\in [R(\C_\opt/\xi),\C_\opt]$, $\eve$ either has no active task on the fast machine, or has some $\tau\in \A$ as the active task on the fast machine, so $\eve$ completes $\A$ at least as quickly as $\opt$.

\paragraph{Case 2:} There is a stuck task, with $\sigma_s + t_s > \C_\opt/\xi$.\\

Define $\A_{\text{early}} = \setof{\tau_i\in \A}{t_i < R(\C_\opt/\xi)}$ and $\A_{\text{late}} = \A\setminus \A_{\text{early}}$.
Let $t< R(\C_\opt/\xi)$ be a time when all tasks in $\set{\tau_s}\cup \J \cup \A_{\text{early}}$ have already arrived; such a time must exist by \cref{clm:junk-comes-early}.
Observe that $\opt(t)$ runs all tasks in $\set{\tau_s}\cup \J\cup\A_{\text{early}}$ on the fast machine due to $\C^t < \C_\opt/\xi$. This further implies that $\pi_{\J\cup \set{\tau_s}\cup\A_{\text{early}}}\le \C_\opt/\xi$.
Also $\pi_{\A_{\text{late}}}\le \C_\opt - R(\C_\opt/\xi)$, simply because $\opt$ must complete the work on tasks $\A_{\text{late}}$ after these tasks arrive. By \cref{clm:junk-free-zone} we may assume without loss of generality that,
after time $R(\C_\opt/\xi)$, $\eve$ is always running a task from $\set{\tau_s}\cup \A\cup \J$. Thus,
\begin{align*}
    \widetilde{\C}_\eve &\le R(\C_\opt/\xi) + \pi_{\A \cup \J \cup \{\tau_s\}}\\
    & = R(\C_\opt/\xi) + \pi_{\A_{\text{early}} \cup \J \cup \{\tau_s\}} + \pi_{\A_{\text{late}}}\\
    & \le R(\C_\opt/\xi) + \C_\opt/\xi + (\C_\opt - R(\C_\opt/\xi))\\
    & \le \xi \C_\opt
\end{align*}

\paragraph{Case 3:} There is a stuck task, with $\sigma_s + t_s \le C_\opt / \xi$.\\

This case will be the most difficult to handle of the three.
It will be useful to focus now on the tasks of $\J$ that arrive after the stuck task is started. Let $t_*$ be the time when $\eve$ starts running $\tau_s$, and let $\J' = \setof{\tau_i \in \J}{t_i \ge t_*}$ be the fake tasks arriving after $t_*$. We first observe that, if no such tasks arrive, $\eve$ performs very well.

\begin{claim}\label{clm:noJ}
If $\J'=\nil$ then $\tilC_\eve \le \xi\C_\opt$.
\end{claim}
\begin{proof}
At time $t_\star$ no tasks $\tau_i\in \mc{A}\cup \J$ can be present, since all such tasks have $\sigma_i + t_i > \C_\opt/\xi$, so $\eve$ would prioritize running them on the fast machine instead of the stuck task. By \cref{clm:junk-free-zone}, we know that after time $t_*$ $\eve$ will always be running tasks from $\set{\tau_s}\cup \A \cup \J'$. The total work on tasks from $\mc{A}$ that arrives after time  $t_\star$ is at most $\C_\opt-t_\star$, so if $\J' = \nil$ we have
\[\tilC_\eve \le t_\star+\pi_s+\C_\opt-t_\star \le \sigma_s/\xi + \C_\opt \le \xi\C_\opt.\]
\end{proof}
By \cref{clm:noJ} we may assume $\J'\neq\nil$. So, let $\sigma_{\min}=\min_{\tau_i\in \J'}(\sigma_i)$; we will be able to control how much work arrives in the TAP by the fact that $\eve$ never decides to run the task $\tau_i\in \J'$ with $\sigma_i =\sigma_{\min}$ on a slow machine. Split $\A$ into $\A_{\text{early}} = \setof{\tau\in\A}{t_\star \le t_i < R(\sigma_{\min})}$ and $\A_{\text{late}} = \setof{\tau\in\A}{t_i \ge \max(t_\star, R(\sigma_{\min}))}$ (note that we use a different threshold to define earliness here than we did in  case 2). First we need the following analogue of \cref{clm:junk-free-zone}.

\begin{claim}\label{clm:big-junk-free-zone}
If $\eve$ starts a task $\tau\in \J'$ at any time $t\in [R(\C_\opt/\xi), \C_\opt]$, then $\tilC\le \xi\C_\opt$.
\end{claim}
\begin{proof}
Let $t_*$ be the last time in $[R(\C_\opt/\xi),\C_\opt]$ when $\eve$
starts a task $\tau\in\F'$. If $t_*$ does not exist the claim is vacuously true.
Let $\tau_i$ be the task that $\eve$ starts at time $t_*$.
Because $\eve$ prioritizes making tasks with larger $\sigma_j+t_j$ values
active, at time $t_*$ there are no tasks $\tau\in \A$ present.
After time $t_*$ at most $\C_\opt - t_*$ work in $\A$ can arrive because $\opt$ must be able to complete this work.
Recalling that $\pi_{\J'} \le \C_\opt/\xi$, we have
\[ \tilC_\eve \le t_* + \pi_{\J'} + (\C_\opt - t_*) \le \xi \C_\opt. \]
\end{proof}

\begin{claim}\label{clm:boundsomething}
$t_\star+\pi_{\J'} + \pi_{\A_{\mathsf{early}}} < (\sigma_{\min}+R(\sigma_{\min}))/\xi$.
\end{claim}
\begin{proof}
Fix a time $t<R(\sigma_{\min})$ after all tasks in $\J'\cup\A_{\mathsf{early}}$ have arrived, and fix a task $\tau_i \in \J'$ with $\sigma_i = \sigma_{\min}$. First, note that by \cref{clm:big-junk-free-zone} we can assume that $\eve$ has not started $\tau_i$ by time $t$. Thus, $\eve$ is free to start $\tau_i$ on a slow machine, but chooses not to. This implies
\begin{equation}
  \label{eq:sigmatxiC}
    \xi\C^{t} < \sigma_i + t < \sigma_{\min} + R(\sigma_{\min}).
\end{equation}

We also observe that $\opt(t)$ must run $\J'\cup \A_{\mathsf{early}}$ on the fast machine, since running any of them on the slow machine would finish after time $\sigma_{\min}$.
Thus, 
\begin{equation}
  \label{eq:tpiJpAsm}
 t_\star + \pi_{\J'} + \pi_{\A_{\mathsf{early}}} \le \C^t.
\end{equation}

Combining \cref{eq:tpiJpAsm} and \cref{eq:sigmatxiC} gives the desired statement.
\end{proof}

The other observation we make is that $R(\sigma_{\min})$ cannot happen too early.

\begin{claim}\label{clm:Rsigmamin-late}
$R(\sigma_{\min}) > (\xi - 1)\sigma_{\min}$.
\end{claim}
\begin{proof}
Let $\tau_i \in \J'$ be a task with $\sigma_i = \sigma_{\min}$. 
Note that $t_i \le R(\sigma_{\min})$ by \cref{clm:junk-comes-early}. 
Then, by \cref{clm:big-junk-free-zone} we may assume without loss of generality that at time $R(\sigma_{\min})$ $\eve$ is not running $\tau_i$, and does not choose to start $\tau_i$ on a slow machine. So, 
\[R(\sigma_{\min}) + \sigma_{\min} > \xi \C^{R(\sigma_{\min})} \ge \xi \sigma_{\min}.\]
\end{proof}

Now we conclude the theorem.
\begin{claim}
$\tilC_\eve\le \xi\C_\opt$.
\end{claim}
\begin{proof}
Note that $\pi_{\A_{\text{late}}}\le \C_\opt-R(\sigma_{\min})$.
Also, note that since $\tau_s$ was not put on a slow machine immediately upon arrival, we must have $\pi_s \le \sigma_s/\xi \le \C_{\opt}/\xi^2$.
Then, applying
\cref{clm:boundsomething} and \cref{clm:Rsigmamin-late} we have
\begin{align*}
\tilC_\eve&\le \pi_s + t_\star + \pi_{\J'}+ \pi_{\A_{\mathsf{early}}} + \pi_{\A_{\text{late}}} \\
&\le \C_\opt/\xi^2 + (\sigma_{\min}+R(\sigma_{\min}))/\xi + \C_\opt-R(\sigma_{\min})\\
&\le (1/\xi^2 + (1+\xi-1)/\xi + 1-(\xi-1))\C_\opt\\
&= (3 + 1/\xi^2 - \xi)\C_\opt\\
&= \xi\C_\opt.
\end{align*}
\end{proof}
\end{proof}

\begin{remark}
The simple nature of the lower bound in \cref{lb:phi}, along with the fact that
$\eve$ gets a competitive ratio quite close to $\phi$ might leave the impression
that $\phi$ is clearly the correct competitive ratio, and a slightly better
analysis of (a natural variant of) $\eve$ would be $\phi$-competitive. However,
this is not the case: in \cref{appendix}, we show that no \defn{non-procrastinating} eventually-committing scheduler can achieve competitive ratio better than $\xi$, where a scheduler is called non-procrastinating if, whenever tasks are present, it always runs at least one task. Thus, if the competitive ratio of $\eve$ can be improved upon, doing so will require a substantially different scheduler: one which occasionally decides to do nothing at all despite work being present.
\end{remark}

\section{A $1.5$-Competitive Never-Committing Scheduler}\label{sec:never}
\newcommand{\JJ}{\mathcal{J}}
In this section we analyze never-committing schedulers.
First we give a simple lower bound.
\begin{proposition}\label{prop:canclb}
Fix $\eps>0$. There is no deterministic $((1+\sqrt{3})/2-\eps)$-competitive
never-committing scheduler.
\end{proposition}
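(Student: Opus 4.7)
The plan is to exhibit, for any fixed deterministic never-committing scheduler, a two-task adaptive adversarial TAP on which the scheduler has competitive ratio at least $\alpha := (1+\sqrt{3})/2$; since $\alpha > \alpha - \eps$, this proves the claim. The value $\alpha$ is the positive root of $\alpha^2 - \alpha - \tfrac{1}{2} = 0$, equivalently $\alpha(\alpha - 1) = \tfrac{1}{2}$. Setting $\sigma_1 = \sigma_2 = 1$, $\pi_1 = 1/\alpha$, $\pi_2 = \alpha - 1/\alpha$, and $t^\star = \alpha - 1$ gives the key identities $\pi_1 + \pi_2 = t^\star + \sigma_1 = t^\star + \sigma_2 = \alpha$ and $t^\star + \pi_2 = 1$. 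The adversary releases $\tau_1 = (1, 1/\alpha, 0)$ at time $0$, simulates the scheduler's behavior until time $t^\star$, and then branches: if the scheduler is running $\tau_1$ on the fast machine at $t^\star$, release $\tau_2 = (1, \alpha - 1/\alpha, t^\star)$; otherwise release no further tasks.

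The analysis proceeds by cases on the scheduler's state at time $t^\star$. In the no-$\tau_2$ branch, $\C_\opt = \pi_1 = 1/\alpha$ by running $\tau_1$ on the fast machine. A scheduler running $\tau_1$ on slow with current progress $s$ either continues on slow (completion $t^\star + \sigma_1 - s$) or cancels and switches to fast (completion $t^\star + \pi_1$); each is at least $1 = \alpha \cdot \C_\opt$, with equality only when $s = t^\star$ and no time has been wasted. An idle scheduler (possibly after canceling) must spend at least $\pi_1$ additional time, finishing at $t^\star + \pi_1 = 3/(2\alpha)$ with ratio $\tfrac{3}{2} > \alpha$. In the $\tau_2$-revealed branch, $\C_\opt \le 1$ by running $\tau_1$ on slow (completes at $\sigma_1 = 1$) and $\tau_2$ on fast (completes at $t^\star + \pi_2 = 1$). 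The scheduler has committed $f \le t^\star$ fast time to $\tau_1$, and each of its options gives completion at least $\alpha$: continuing $\tau_1$ on fast and $\tau_2$ on slow gives $\max(t^\star + \pi_1 - f, t^\star + \sigma_2) = \alpha$; continuing $\tau_1$ on fast and $\tau_2$ on fast gives $t^\star + \pi_1 - f + \pi_2 \ge \alpha$; canceling $\tau_1$'s fast progress and then running $\tau_1$ on slow with $\tau_2$ on fast gives $\max(t^\star + \sigma_1, t^\star + \pi_2) = \alpha$.

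The main obstacle will be carefully enumerating the scheduler's strategy space, since the never-committing model permits arbitrary cancellations and restarts. Mixed histories (e.g.\ a scheduler that did slow work, canceled, and is now on fast) reduce cleanly to the two clean cases above once we observe that in the no-$\tau_2$ branch the scheduler's wasted or idle time adds directly to its completion time, and that in the $\tau_2$-revealed branch the adversary's choice of $\tau_2$ forces each remaining option to reduce to either $\pi_1 + \pi_2 = \alpha$ or $t^\star + \sigma_1 = \alpha$ regardless of the scheduler's current progress $f$. The core algebraic point is $\alpha^2 - \alpha = 1/2$, which is precisely what makes these two terminal completion-time bounds coincide at $\alpha$; without this identity, one of the scheduler's options would be strictly cheaper than $\alpha$ in some case. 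Thus the scheduler's ratio on this TAP is at least $\alpha$, and since $\alpha > \alpha - \eps$, the proposition follows.
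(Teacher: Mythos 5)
Your construction is, up to renaming, identical to the paper's: with $\alpha=(1+\sqrt3)/2$ your parameters are $\pi_1=1/\alpha=\sqrt3-1$, $\pi_2=\alpha-1/\alpha=(3-\sqrt3)/2$, branch time $t^\star=\alpha-1=(\sqrt3-1)/2$, which are exactly the paper's $\tau_1=(1,2\psi,0)$, $\tau_2=(\cdot,1-\psi,\psi)$ with $\psi=(\sqrt3-1)/2$; your $\sigma_2=1$ versus the paper's $\sigma_2=\infty$ is immaterial since $t^\star+\sigma_2=\alpha$ already suffices, and your identity $\alpha^2-\alpha=1/2$ plays the same role as the paper's golden-ratio-style coincidence. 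The only presentational difference is that you phrase it as an adaptive adversary branching on the scheduler's state, while the paper fixes two TAPs and uses indistinguishability before time $\psi$; for deterministic schedulers these are equivalent.

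One step of your no-$\tau_2$ branch is not right as stated, and it is exactly the point where care is needed. You branch on whether the scheduler is \emph{running} $\tau_1$ on the fast machine at the instant $t^\star$, and in the no-$\tau_2$ branch you assert that an idle scheduler ``must spend at least $\pi_1$ additional time.'' But the schedule model only constrains total machine time, so a scheduler may run $\tau_1$ on the fast machine during $[0,t^\star-\delta]$, idle (without cancelling) across the observation instant, and resume: it is not ``running on fast at $t^\star$,'' yet it needs only $\pi_1-f$ further time and finishes near $\pi_1=\C_\opt$, giving ratio near $1$ in your no-$\tau_2$ branch while never entering the $\tau_2$ branch. So the ``wasted or idle time adds directly to completion time'' reduction fails for retained (un-cancelled) fast progress. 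The fix is local: branch instead on whether $\tau_1$ has un-cancelled \emph{slow} progress just before $t^\star$ (if yes, withhold $\tau_2$, and completion is at least $\min(\sigma_1,\,t^\star+\pi_1)\ge 1=\alpha\C_\opt$; if no, release $\tau_2$, and either some task is restarted/placed on a slow machine after $t^\star$, costing at least $t^\star+1=\alpha$, or the fast machine still owes at least $\pi_1+\pi_2-t^\star$ after $t^\star$, again giving $\alpha$ against $\C_\opt=1$). Relatedly, you should observe the state strictly before the arrival time (the paper uses $\psi-\eps/2$, absorbing the slack into the $\eps$ of the statement) so that the branching condition cannot depend on the arrival it is deciding; the paper's own wording also elides the paused-with-fast-progress case, but its $\eps/2$-early observation plus the slow-progress-based reading is how the argument is meant to be closed.
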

\begin{proof}
We may assume $\eps<.001$.
Let $\psi = (\sqrt{3}-1)/2$.
Let $\mc{T} = \Big((1, 2\psi, 0), (\infty, 1-\psi, \psi) \Big)$; that is, $\tau_1$ has $\sigma_1=1,\pi_1=2\psi,t_1=0$, and $\tau_2$ has $\sigma_2=\infty, \pi_2=1-\psi, t_2=\psi$.
Let $\mc{T}' = \Big((1, 2\psi, 0)\Big)$ be the same TAP without the second task. We have $\C_{\opt}^{\mc{T}'} = 2\psi$, since $\opt$ just runs the single task on the fast machine, and $\C_\opt^\mc{T} = 1$, since $\opt$ can run $\tau_1$ on a slow machine and $\tau_2$ on the fast machine. \\ 

Suppose that $\alg$ is a $(\psi+1-\eps)$-competitive scheduler. On TAP $\mc{T}'$, at time $\psi - \eps/2$, we claim $\alg$ must be running $\tau_1$ on the fast machine. If not, then $\alg$'s completion time must be at least $\min(\sigma_1, \psi - \eps/2 + \pi_1) = 1$, with the branch of the min depending on whether $\tau_1$ is ever moved to the fast machine --- but this gives competitive ratio $1/(2\psi) = 1+\psi$. \\

Before time $\psi$, it is impossible to distinguish between
$\mc{T}$ and $\mc{T}'$. Thus, $\alg$ must be running $\tau_1$ on the fast machine at time $\psi - \eps/2$ on TAP $\mc{T}$. Now, we have $\C_{\alg}^{\mc{T}}\ge \min(\sigma_1 + \psi - \eps/2, \pi_1 + \pi_2) = 1 + \psi - \eps/2$, with the branch of the min depending on whether $\tau_1$ is ever moved to a slow machine --- but this gives competitive ratio $(1+\psi - \eps/2)/1$. Thus, $\alg$ is not actually $(\psi+1-\eps)$-competitive.\\
\end{proof}

Now we give a $1.5$-competitive never-cancelling scheduler, which we call $\can$. Note that this competitive ratio is smaller than the lower bound of $\phi \approx 1.618$ known for eventually committing schedulers, so this demonstrates a separation between the strengths of schedulers in the two models.

\begin{scheduler}\label{alg:cancel} At time $t$:
\begin{itemize}
\item If task $\tau_i$ has $\sigma_i + t\le 1.5 \C^{t}$ but is not currently running on a slow machine, start $\tau_i$ on a slow machine, cancelling its fast machine implementation if necessary.
\item Let $\mc{P}$ be the set of $\tau_i$ that have arrived and are not running on a
slow machine. Choose $\tau_i\in \mc{P}$ maximizing $\sigma_i + t_i$, breaking ties
by choosing the task with the smaller $i$. Run $\tau_i$ on the fast machine during this time step.
\end{itemize}
\end{scheduler}



\begin{theorem}
$\can$ is a $1.5$-competitive never-committing scheduler. 
\end{theorem}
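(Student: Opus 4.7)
The plan is to mirror the architecture of the $\eve$ analysis, with cancellation sidestepping the ``stuck task'' complications that forced the $\eve$ bound to loosen. As in the $\eve$ proof, any task $\can$ completes on a slow machine is started on slow at some time $t$ with $\sigma_i + t \le 1.5\,\C^t \le 1.5\,\C_\opt$, so it finishes by $1.5\,\C_\opt$; the remaining step is to bound $\tilC_\can$.

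First I would define $\A$ (tasks $\opt$ runs on the fast machine) and $\F$ (tasks $\opt$ runs on a slow machine but $\can$ completes on the fast machine), exactly as in the $\eve$ proof, and restate \cref{clm:junk-comes-early} with threshold $\C_\opt/1.5$: every $\tau_i \in \F$ arrives before $R(\C_\opt/1.5)$ and has $\pi_i < \sigma_i/1.5$. The argument is essentially unchanged from the $\eve$ setting: $\can$ did not start $\tau_i$ on a slow machine at time $t_i$, so $\sigma_i + t_i > 1.5\,\C^{t_i}$; combined with $\sigma_i + t_i \le \C_\opt$ this pins down both quantities, with the bound on $\pi_i$ following because the canonical $\opt(t_i)$ must run $\tau_i$ on the fast machine. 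A crucial structural observation is that $\can$'s priority rule $\sigma_i + t_i$ strictly prefers $\A$-tasks (which have $\sigma_i + t_i > \C_\opt$) to $\F$-tasks (which have $\sigma_i + t_i \le \C_\opt$), so whenever both types are present in $\mc{P}$, $\can$ runs the $\A$-task and ``$\F$-work'' is only done in $\A$-free windows.

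The main argument partitions $\A = \A_{\text{early}} \cup \A_{\text{late}}$ at $R(\C_\opt/1.5)$ and uses three ingredients: $\pi_\F + \pi_{\A_{\text{early}}} < \C_\opt/1.5$ (from $\opt(t)$ running both on its fast machine just before $R(\C_\opt/1.5)$), $\pi_{\A_{\text{late}}} \le \C_\opt - R(\C_\opt/1.5)$, and the fact that cancellation lets $\can$ eventually move any lingering non-$\A$ task off its fast machine once the migration condition becomes satisfied. Naively summing work bounds gives only $\tilC_\can \le \tfrac{5}{3}\C_\opt$, so the key technical step is to save an extra $\tfrac{1}{6}\C_\opt$ either by leveraging that if the fast machine is continuously busy on $[0, R(\C_\opt/1.5)]$ it has already completed $R(\C_\opt/1.5)$ units of the $\A_{\text{early}} \cup \F$ budget (which subtracts from the remaining work), or, in the case the fast machine is idle before $R(\C_\opt/1.5)$, using a fresh ``reset'' at the last idle-to-busy transition to get a tighter local bound of the form $\tilC_\can \le t_0 + \pi_\A \le \C_\opt$.

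The hard part will be cleanly handling the interaction between these cases together with the wasted work incurred when $\can$ starts a task on the fast machine and later cancels it: such wasted time occupies the fast machine without completing any task, and naive accounting double-counts. I expect the cleanest treatment is to charge each migration event to the growth in $\C^t$ that triggers it (since migration at time $t^*$ requires $\sigma_i + t^* \le 1.5\,\C^{t^*}$, bounding wasted work in terms of $\C^{t^*}$) so the total waste is globally dominated by $\C_\opt$, and then to combine this with the case analysis above to squeeze the bound to $1.5\,\C_\opt$.
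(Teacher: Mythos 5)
Your plan has a genuine gap at exactly the point where the constant $1.5$ must come from. You correctly compute that the $\eve$-style accounting ($\tilC_\can \le R(\C_\opt/1.5) + \pi_{\F\cup\A_{\text{early}}} + \pi_{\A_{\text{late}}}$) only gives $\tfrac{5}{3}\C_\opt$, but neither of your proposed savings of $\tfrac{1}{6}\C_\opt$ is actually an argument. In the ``fast machine busy on $[0,R(\C_\opt/1.5)]$'' case, even if none of that early work were wasted, subtracting it only improves the bound to $\pi_\F + \pi_{\A_{\text{early}}} + (\C_\opt - R(\C_\opt/1.5)) \le \tfrac{5}{3}\C_\opt - R(\C_\opt/1.5)$, which is $\le 1.5\,\C_\opt$ only if $R(\C_\opt/1.5) \ge \C_\opt/6$ --- and you have no lower bound on $R(\C_\opt/1.5)$. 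In the ``idle reset'' case, the bound $\tilC_\can \le t_0 + \pi_\A$ is not valid: after the last idle time $t_0$ the fast machine may also process fake tasks arriving after $t_0$ and work that it later cancels, so you are back to controlling exactly the quantities you hoped the reset would eliminate. Two further steps are shakier than you acknowledge: the inequality $\pi_\F + \pi_{\A_{\text{early}}} < \C_\opt/1.5$ is only justified for fake tasks with $\sigma_i+t_i$ larger than $\C^t$ (the analogue of $\J$); small fake tasks may be run on slow machines by $\opt(t)$, which is precisely why the $\eve$ proof needed \cref{clm:junk-free-zone} and its Case~3 --- cancellation removes the stuck task, but not this issue. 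And the treatment of cancelled (wasted) fast work is stated as a hope (``charge each migration to the growth of $\C^t$'') rather than a lemma.

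The paper's proof does not mirror the $\eve$ analysis at all, and it is worth seeing why its route avoids your difficulties. It only accounts for work the fast machine performs \emph{after} time $\C_\opt$: by the priority rule no task of $\A$ runs on the fast machine after $\C_\opt$ (\cref{clm:wedoreal}), and the leftover tasks are grouped into scale bands $\sigma_i + t_i \in [x,1.5x]$. For each band, \cref{clm:leftover-work-small} bounds the leftover work by $x - R(x)$, by comparing $\can$'s remaining work on the priority class $\{\sigma_i + t_i \ge x\}$ with that of $\opt(R(x)-\eps)$ just before time $R(x)$; \cref{clm:learn-late} shows $R(x) > 1.5x - \lambda$, where $\lambda$ is the smallest value of $\sigma_i+t_i$ among tasks run late; and a two-case analysis on $\lambda$ (either $\lambda \ge \tfrac{2}{3}\C_\opt$, or $\tfrac{1}{2}\C_\opt \le \lambda < \tfrac{2}{3}\C_\opt$, so at most two bands contribute) bounds the total overhang after $\C_\opt$ by $\tfrac{1}{2}\C_\opt$. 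Note that this framing also dissolves your waste-accounting worry for free: when a task migrates to a slow machine its cancelled fast work simply removes it from the priority class being compared against $\opt$, so no charging scheme is needed. To salvage your outline you would need a genuine proof of the missing $\C_\opt/6$ saving and a treatment of the small fake tasks; as written, the proposal does not contain either.
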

\begin{proof}
Fix TAP $\mc{T}$. 
Let $\widetilde{C}_\can$ denote the final time when $\can$ has work on the fast machine.
Observe that if $\can$ ever runs $\tau_i$ on a slow machine, then $\can$
finishes $\tau_i$ before time $1.5\C_\opt$. Thus, to show that $\can$ is
$1.5$-competitive it suffices to show $\widetilde{C}_\can\le 1.5\C_\opt$. 
\\

Let $\A = \{\tau_i\colon \sigma_i + t_i > \C_\opt\}$ be the set of tasks that $\opt$ \emph{actually} runs on the fast machine. 


\begin{claim}\label{clm:wedoreal}
$\can$ never runs a task $\tau\in \A$ on the fast machine after time $\C_\opt$.
\end{claim}
\begin{proof}
$\can$ always allocates the fast machine to the present task with the largest
value of $\sigma_i + t_i$ among tasks that aren't running on slow machines.
Thus, whenever there are tasks from $\A$ that aren't running on slow machines, $\can$ will run one such task on the fast machine. $\opt$ is able to complete all tasks in $\A$ on the fast machine by time $\C_\opt$. Thus, $\can$ completes or starts on slow machines all tasks $\tau\in \A$ before time $\C_\opt$.
\end{proof}

This means that the only way to have $\tilC_\eve > \C_\opt$ is if there are tasks with $\sigma_i + t_i \le \C_{\opt}$ that have yet to be completed at time $\C_{\opt}$; we assume that this is the case for the remainder of the proof. 
Let $\Pi(x)$ be the total amount of work $\can$ performs on the fast machine after time $\C_\opt$ across all tasks with $\sigma_i + t_i \in [x, 1.5x]$. 
For any $x \le \C_{\opt}$, let $R(x) = \inf \{t \ | \ \C^t \ge x\}$ be the first time an online algorithm becomes aware that the optimal schedule requires $x$ completion time; the following key claim allows us to bound this left-over work $\Pi(x)$ in terms of $R(x)$.

\begin{claim}\label{clm:leftover-work-small}
For all $x$, we have $\Pi(x) \le x - R(x)$.
\end{claim}
\begin{proof}
Let $\mc{J}_x$ denote the set of tasks with $\sigma_i+t_i\le 1.5x$ that $\can$
runs on the fast machine at some time after $\C_\opt$. First, note that all
$\tau_i\in \mc{J}_x$ must have $t_i < R(x)$, or else $\tau_i$ would be placed on a
slow machine upon arrival.
Choose $\varepsilon$ sufficiently small, such that no tasks arrive between times
$R(x) - \varepsilon$ and $R(x)$. Since $R(x) - \varepsilon < R(x)$, we know
$\C^{R(x) - \varepsilon} < x$, and so $\opt(R(x) - \varepsilon)$ must run all
tasks with $\sigma_i + t_i \ge x$ on the fast machine. In order for $\opt(R(x) -
\varepsilon)$ to finish these tasks before time $\C^{R(x) - \varepsilon} < x$,
$\opt(R(x) - \varepsilon)$ must have at most $x - R(x) + \varepsilon$ fast work
remaining across all such tasks. \\

Now, by the same argument as in \cref{clm:wedoreal}, because $\can$ prioritizes
tasks with $\sigma_i+t_i\ge x$ over tasks with $\sigma_i+t_i< x$ on the fast
machine whenever they are present (and not yet started on slow machines), $\can$
has at most $x - R(x) + \varepsilon$ work remaining on tasks in $\mc{J}_x$ at
time $R(x) - \varepsilon$. Because no more tasks from $\mc{J}_x$ arrive after
this time, we have $\Pi(x) \le x-R(x)+\eps$ as well. The claim held for all
$\eps>0$, and so taking $\eps\to 0$ we have $\Pi(x)\le x-R(x)$.
\end{proof}

We now give an observation to control $R(x)$. 
Let $\tau_{i_\star}$ be the task, among all tasks that $\can$ runs on the fast machine after time $\C_\opt$, with the smallest value of $\sigma_i + t_i$.
Let $\lambda = \sigma_{i_\star}+t_{i_\star}$.

\begin{claim}\label{clm:learn-late}
For all $x \ge \lambda$, we have $R(x) > 1.5 x - \lambda$.
\end{claim}
\begin{proof}
First, note that $t_{i_\star}\le R(\lambda)\le \lambda$ or else $\can$ would start
$\tau_{i_\star}$ on a slow machine upon arrival.
Now, because $\can$ doesn't start $\tau_{i_\star}$ on a slow machine
at time $R(x)> t_{i_\star}$, we have $R(x) + \sigma_{i_\star} > 1.5 x$.
\end{proof}

To prove the theorem, it will now suffice to branch into two cases, based on how large $\lambda$ is.

\paragraph{Case 1:} $\lambda \ge (2/3) \C_{\opt}$.\\

In this case, since $1.5 \lambda \ge \C_{\opt}$, by \cref{clm:wedoreal} all left-over work at time $\C_\opt$ comes from tasks with $\sigma_i + t_i \in [\lambda, 1.5\lambda]$. By \cref{clm:leftover-work-small}, the total amount of such work is at most $\lambda - R(\lambda)$. Then, by \cref{clm:learn-late}, we know $R(\lambda) > .5\lambda$. Together, this implies that the total amount of leftover work is at most $.5\lambda \le .5\C_\opt$.

\paragraph{Case 2:} $\lambda < 2 \C_{\opt}/3$.\\

First note that $\lambda \ge \C_{\opt}/2$ or else $\tau_{i_\star}$ would be started on a slow machine at time $\C_\opt$. So, all left-over work at time $\C_{\opt}$ comes either from tasks with $\sigma_i + t_i \in [\lambda, 1.5\lambda]$, or from tasks with $\sigma_i + t_i \in [1.5\lambda, 1.5^2 \lambda]$. By \cref{clm:leftover-work-small}, we can therefore bound the total amount of leftover work by $(\lambda - R(\lambda))+ (1.5\lambda - R(1.5\lambda))$. Now, by \cref{clm:learn-late}, this quantity can be at most $(\lambda - .5\lambda) + (1.5\lambda - 1.25\lambda) = .75\lambda$. Since $\lambda < 2\C_\opt/3$, this is at most $.5\C_\opt$.
\end{proof}

\begin{remark}
In \cref{appendix} we show that \cref{alg:cancel} is optimal among
never-committing schedulers that never cancel implementations running on slow machines. This shows that improving on \cref{alg:cancel} will require a substantially different scheduler.
\end{remark}

\section{Extending Beyond the Massively Parallel Regime}
In \cref{thm:tgr}, we have shown that \cref{alg:2gr} is a $2$-competitive
instantly-committing scheduler in the Massively Parallel regime of the SPDP. In
this section, we will show that in fact, \cref{alg:2gr} is a $2$-competitive
scheduler even outside of the Massively Parallel regime, although the analysis
is slightly more complicated. This result is interesting in its own right,
resolving an open question from \cite{ku24}.However, we think that the main
virtue of this proof is that it serves as a proof-of-concept that results from
the conceptually simpler Massively Parallel regime can be adapted to apply to
the general SPDP: we conjecture that all upper bounds holding in the massively parallel regime should also hold in the general SPDP.\\ 

\cref{alg:2gr} is not a defined scheduler in the SPDP, because
we specify the decisions for which tasks to run, but do not specify how to
schedule the tasks. We extend $\tgr$ to the general SPDP as follows:
\begin{scheduler}\label{alg:2grSPDP} When task $\tau_i$ arrives:
\begin{itemize}
\item If $\sigma_i+t_i > 2\C^{t_i}$ parallelize $\tau_i$.
\item Otherwise, serialize $\tau_i$.
\end{itemize}
At every timestep, if there are $x$ serial jobs present, then $\tgr$ schedules the jobs by allocating a processor to each of the $\min(p,x)$ serial jobs with the most remaining work, and then allocating any remaining processors to an arbitrary parallel job (if a parallel job is present).\\
\end{scheduler}

Now we analyze $\tgr$. 
We say $\tgr$ is \defn{saturated} at time $t$ if $\tgr$ has no idle
processors at time $t$. 
\begin{lemma}\label{lem:unsatez}
If $\tgr$ is unsaturated right before finishing, then $\C_\tgr \le 2\C_\opt$.
\end{lemma}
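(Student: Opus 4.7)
The plan is to first characterize when $\tgr$ can be unsaturated. Because $\tgr$'s scheduling rule assigns any processors not used by serial jobs to the active parallel job whenever one is present, $\tgr$ is saturated at every moment a parallel job exists. Hence ``unsaturated right before finishing'' implies that no parallel job is present at time $\C_\tgr - \varepsilon$, and so the last task to complete must be a serial task $\tau_j$. Since $\tgr$ serialized $\tau_j$, the algorithm's definition gives $t_j + \sigma_j \le 2\C^{t_j} \le 2\C_\opt$.

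Next, I would case-split on whether $\tgr$ is \emph{ever} saturated. If it is never saturated, then by the preceding observation $\tgr$ never maintains a parallel job, so it never parallelizes any task; and because at most $p-1$ serial jobs are ever present, every serial task runs continuously from its arrival, completing by $t_i + \sigma_i \le 2\C_\opt$. This gives $\C_\tgr \le 2\C_\opt$.

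Otherwise, let $t^\star$ be the last saturated moment. Throughout $(t^\star, \C_\tgr]$ the scheduler stays unsaturated, so no parallel job is ever present and the serial queue has size strictly below $p$; in particular, any task arriving after $t^\star$ must be serialized (parallelizing would create a parallel job and saturate $\tgr$, contradicting maximality of $t^\star$) and then runs without contention. If $\tau_j$ arrives in $(t^\star, \C_\tgr]$ we again obtain $\C_\tgr = t_j + \sigma_j \le 2\C_\opt$. The remaining case is $t_j \le t^\star$: here $\tau_j$ sits in the queue at $t^\star$ with some remaining work $w \le \sigma_j$ and runs continuously on $(t^\star, \C_\tgr]$, giving $\C_\tgr = t^\star + w$.

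The main obstacle is bounding $t^\star + w$ by $2\C_\opt$. My plan is to decompose $\C_\tgr = t_j + \sigma_j + P$, where $P$ is the total time $\tau_j$ spent paused during $[t_j, t^\star]$, and to argue that $P \le 2\C_\opt - (t_j + \sigma_j)$. At every paused moment, $\tgr$ has at least $p+1$ serial jobs including $\tau_j$ with remaining work at least $w$, and each such $\tau_i$ satisfies $\sigma_i + t_i \le 2\C^{t_i} \le 2\C_\opt$. I would use this to compare the processor-time $\tgr$ expends on ``large'' serial jobs during these pauses to the processor-time $\opt$ must spend on the same tasks, yielding a lower bound on $\C_\opt$ that matches $t^\star + w$. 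I expect the trickiest part will be carefully accounting for parallel jobs that $\tgr$ may be running simultaneously during some of $\tau_j$'s pauses, since $\opt$ and $\tgr$ may classify tasks differently and the two schedulers' bookkeeping of parallel versus serial work does not directly align.
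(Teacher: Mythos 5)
There is a genuine gap: the only hard case of the lemma --- your final case, where the last-finishing task $\tau_j$ arrived at or before the last saturated moment and may have been paused --- is left as a plan rather than a proof, and the plan as sketched is unlikely to go through. You propose to bound the paused time $P$ by charging processor-time that $\tgr$ spends on ``large'' serial jobs during $\tau_j$'s pauses against processor-time that $\opt$ must spend on the same tasks. But the inequality $\sigma_i + t_i \le 2\C^{t_i} \le 2\C_\opt$ for tasks that $\tgr$ serializes says nothing about how much work $\opt$ spends on them: $\opt$ may run exactly those tasks in parallel with work $\pi_i$ arbitrarily smaller than $\sigma_i$, so the comparison of processor-time does not yield a lower bound on $\C_\opt$. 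This is precisely the mismatch you flag at the end, and it is not a technicality --- it is where the argument would break.

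The missing idea, which makes the whole case collapse, is a monotonicity consequence of the ``most remaining work first'' rule: if $\tau_j$ is ever passed over at some time, then at that time there are at least $p$ serial jobs each with remaining work at least that of $\tau_j$, and since a job is only ever skipped in favor of jobs with at least as much remaining work, each of those $p$ jobs retains remaining work at least that of $\tau_j$ at every later time at which $\tau_j$ is unfinished. Hence as long as $\tau_j$ is present, at least $p$ other serial jobs are present and $\tgr$ is saturated; in particular $\tgr$ would be saturated right before $\tau_j$ finishes, contradicting the hypothesis. So in your notation $P = 0$: any task present at an unsaturated instant has run continuously since its arrival, and therefore $\C_\tgr \le t_j + \sigma_j \le 2\C^{t_j} \le 2\C_\opt$ with no reference to $\opt$'s work at all. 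Your preliminary reductions (no parallel job at an unsaturated time, the never-saturated case, and tasks arriving after $t^\star$) are correct and consistent with the paper's argument, but without this observation the central case remains unproved.
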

\begin{proof}
We claim that if $\tgr$ is unsaturated at time $t$, then for each task $\tau_i$
present at time $t$, $\tau_i$ has been run on every time step since it arrived.
Suppose this is not the case. Then, there must have been some time step before
time $t$ when there were at least $p$ serial jobs with at least as much
remaining work as $\tau_i$. But then $\tau_i$ will finish at the same time as
these other jobs, contradicting the fact that $\tgr$ is unsaturated at time $t$.
Thus, if $\tgr$ is unsaturated at time $t$, then $t\le \sigma_i+t_i$ for some
$i$ such that $\tgr$ ran $\tau_i$ in serial. Thus, $t\le 2\C_\opt$, as desired.\\
\end{proof}

By virtue of \cref{lem:unsatez} it suffices to consider the case that $\tgr$
is saturated immediately before finishing.
Let $t_*$ be the final time in $[0,\C_\tgr)$ when $\tgr$ is unsaturated (we set
$t_* = 0$ if $\tgr$ is always saturated).
Let $i_*\in [n]$ be the smallest $i$ such that $t_i\ge t_*$; in fact we will
have $t_{i_*}=t_*$, since in order to transition from being
unsaturated to being saturated, some tasks must arrive.
For integer $i\in [i_*, n]$, let $\K_\alg^i$ denote the sum of $\pi_j$ for each
$\tau_j$ with $i_*\le j\le i$ that $\alg$ runs in parallel; If $\alg$
is an instantly-committing scheduler then $\K_\alg^{i}$ can be computed at time $t_i$.
Now we prove an analogue of \cref{lem:tgrinvar}.
\begin{lemma}\label{lem:invarREPRISE}
Fix a length $n$ TAP. 
For all $i\in [i_*, n]$, and for all instantly-committing
schedulers $\alg$, 
\begin{equation}
  \label{eq:invarREPRISE}
\K_\tgr^i \le (\C^i_\alg-t_*)p+\K^i_{\alg}.
\end{equation}
\end{lemma}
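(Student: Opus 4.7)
The plan is to mimic the inductive structure of \cref{lem:tgrinvar}, but now tracking total parallel work rather than fast-machine completion time, with the induction starting at $i = i_*$ to reflect the shift of the time origin to $t_*$. For the base case $i = i_*$, I check the four possibilities for how $\tgr$ and $\alg$ handle $\tau_{i_*}$. The only nontrivial possibility is when $\tgr$ parallelizes while $\alg$ serializes: then $\C_\alg^{i_*} \ge \sigma_{i_*} + t_{i_*} = \sigma_{i_*} + t_*$, so using $\pi_{i_*} \le p \sigma_{i_*}$ (parallelization on $p$ processors cannot be slower than serial, which we take as a WLOG assumption on the instance) yields $\K_\tgr^{i_*} = \pi_{i_*} \le p\sigma_{i_*} \le p(\C_\alg^{i_*} - t_*)$.

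For the inductive step $i \to i+1$, the two easy sub-cases are: when $\tgr$ serializes $\tau_{i+1}$, the LHS of \cref{eq:invarREPRISE} is unchanged while the RHS is monotone non-decreasing in $i$; when $\alg$ parallelizes $\tau_{i+1}$, both sides pick up a matching $\pi_{i+1}$. The substantive sub-case is when $\tgr$ parallelizes while $\alg$ serializes. Then \cref{alg:2grSPDP}'s rule gives $\sigma_{i+1} + t_{i+1} > 2\C^{t_{i+1}}$, and $\alg$'s serial choice forces $\C_\alg^{i+1} \ge \sigma_{i+1} + t_{i+1}$, so $\C_\alg^{i+1} > 2 \C^{t_{i+1}}$. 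I apply the inductive hypothesis with $\alg' = \opt(t_{i+1})$ to obtain $\K_\tgr^i \le p(\C^{t_{i+1}} - t_*) + \K_{\opt(t_{i+1})}^i$ (using $\C_{\opt(t_{i+1})}^i \le \C^{t_{i+1}}$). Because $\sigma_{i+1} + t_{i+1} > \C^{t_{i+1}}$, $\opt(t_{i+1})$ cannot serialize $\tau_{i+1}$ without blowing its own completion time, so it parallelizes, giving $\K_{\opt(t_{i+1})}^{i+1} = \K_{\opt(t_{i+1})}^i + \pi_{i+1}$. Adding $\pi_{i+1}$ to the IH bound yields $\K_\tgr^{i+1} \le p(\C^{t_{i+1}} - t_*) + \K_{\opt(t_{i+1})}^{i+1}$. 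A separate processor-time budget argument bounds $\K_{\opt(t_{i+1})}^{i+1} \le p(\C^{t_{i+1}} - t_*)$, since tasks $\tau_{i_*}, \ldots, \tau_{i+1}$ all arrive at time $\ge t_{i_*} = t_*$ and must finish by $\C^{t_{i+1}}$ in $\opt(t_{i+1})$. Combining this with $\C^{t_{i+1}} \le \C_\alg^{i+1}/2$ and $t_* \ge 0$ gives $\K_\tgr^{i+1} \le 2p(\C^{t_{i+1}} - t_*) \le p(\C_\alg^{i+1} - t_*) + \K_\alg^{i+1}$.

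The main obstacle is the substantive sub-case, and specifically the budget bound $\K_{\opt(t_{i+1})}^{i+1} \le p(\C^{t_{i+1}} - t_*)$. This is where the generalization from tracking fast-machine completion time (as in \cref{lem:tgrinvar}) to tracking total parallel work requires care: we must relate parallel-work quantities to processor-time quantities, and this conversion is only clean because $t_*$ was chosen as the final unsaturated moment of $\tgr$ (with $t_{i_*} = t_*$ correspondingly), so every task contributing to $\K_{\opt(t_{i+1})}^{i+1}$ arrives in $[t_*, \C^{t_{i+1}}]$ and fits within $\opt(t_{i+1})$'s $p(\C^{t_{i+1}} - t_*)$ processor-time budget over that window. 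This reliance on the choice of $t_*$ is also what later lets us reduce \cref{thm:tgr}'s completion-time bound to the unsaturated case handled by \cref{lem:unsatez}.
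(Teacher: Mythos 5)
Your argument is correct and follows essentially the same route as the paper: the same induction starting at $i_*$, the same easy cases, and the same key step of invoking the invariant for $\opt(t_{i+1})$ (which is forced to parallelize $\tau_{i+1}$) together with the processor-time budget bound $\K_{\opt(t_{i+1})}^{i+1}\le p(\C^{t_{i+1}}-t_*)$ --- a bound the paper uses only implicitly in its final chain of inequalities, and which you rightly identify as the point where the choice $t_{i_*}=t_*$ matters. The one cosmetic weakness is your base case: the inequality $\pi_{i_*}\le p\sigma_{i_*}$ should not be imposed as a ``WLOG'' assumption on the instance (changing $\pi_{i_*}$ alters $\C^{t}$ and hence $\tgr$'s own decisions), but it is also unnecessary, since in that subcase $\tgr$ parallelizing $\tau_{i_*}$ forces $\sigma_{i_*}+t_{i_*}>2\C^{t_{i_*}}$, hence $\opt(t_{i_*})$ parallelizes $\tau_{i_*}$ and $\C^{t_{i_*}}\ge t_{i_*}+\pi_{i_*}/p$, which already yields $\pi_{i_*}<p\sigma_{i_*}$ (equivalently, your main-case argument applies verbatim at $i=i_*$ with the convention $\K^{i_*-1}=0$, which is in effect how the paper treats its ``trivial'' base case).
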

\begin{proof}
We prove \cref{eq:invarREPRISE} by induction on $i$.
The base case is $i=i_*$; here the claim trivially holds. 
Subsequently, note that if $\alg$ takes at least as much work as $\tgr$ on $\tau_i$ (i.e., either $\alg$ runs $\tau_i$ in parallel or $\tgr$ runs $\tau_i$ in serial) then $\K_\alg^{i}\ge \K_\tgr^{i}$, and
$\C_\alg^{i}\ge t_*$, so \cref{eq:invarREPRISE} is true.
In the case that $\alg$ serializes $\tau_i$ while $\tgr$ parallelizes
$\tau_i$ we now have:
\[ \C_\alg^{i} \ge \sigma_i+t_i \ge 2\C^{t_i}. \] 
One consequence of this is that $\opt(t_i)$ parallelizes $\tau_i$, and so we
have \cref{eq:invarREPRISE} for $(i,\opt(t_i))$. Thus, 
\[ \K_\tgr^{i} \le (\C^{t_i}-t_*)p+\K_{\opt(t_i)}^{i}\le (2\C^{t_i}-2t_*)p \le
(\C_\alg^{i}-t_*)p.\] 
\end{proof}

In the Massively Parallel regime \cref{thm:tgr} followed immediately from
\cref{lem:tgrinvar}. Slightly more work is required in the general setting, but
\cref{lem:invarREPRISE} is still very useful.
\begin{theorem}
$\tgr$ is a $2$-competitive instantly-committing scheduler in the SPDP.
\end{theorem}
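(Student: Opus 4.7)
The plan is to handle the case where $\tgr$ is saturated immediately before finishing; \cref{lem:unsatez} already resolves the unsaturated case. Let $t_*$ be the final unsaturated time, so $\tgr$ is saturated throughout $(t_*, \C_\tgr)$. A crucial preliminary observation is that $\tgr$ being unsaturated at $t_*$ forces the parallel queue to be empty at $t_*$ --- otherwise, by the tie-breaking rule of \cref{alg:2grSPDP}, the idle processors would have been directed to the parallel job. Combined with the saturation identity, this yields the clean decomposition
\[
(\C_\tgr - t_*)\, p \;=\; \K_\tgr^n \;+\; \Sigma \;+\; R,
\]
where $\Sigma := \sum_{i \ge i_*,\ \tgr \text{ serializes } \tau_i} \sigma_i$ is the total serial work that $\tgr$ performs on tasks with $i \ge i_*$, and $R$ is the residual serial work at $t_*$ carried over from tasks with $i < i_*$. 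The goal then reduces to showing that the right-hand side is at most $(2\C_\opt - t_*)\, p$.

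For the parallel term, apply \cref{lem:invarREPRISE} with $\alg = \opt$ to get $\K_\tgr^n \le (\C_\opt - t_*)\, p + \K_\opt^n$; since $\K_\opt^n$ is parallel work performed by $\opt$ on tasks that arrive no earlier than $t_*$, it fits within $\opt$'s post-$t_*$ processor-time budget of $(\C_\opt - t_*)\, p$, giving $\K_\tgr^n \le 2(\C_\opt - t_*)\, p$. For the residual $R$, I would use the structural observation from the proof of \cref{lem:unsatez}: because $\tgr$ is unsaturated at $t_*$, every old serial task still present has been processed continuously since its arrival, and there are strictly fewer than $p$ of them, so the corresponding processor-time $\tgr$ already spent on these tasks before $t_*$ can be charged against the pre-$t_*$ processor-time budget $t_* p$. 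For the fresh serial term $\Sigma$, each contributing task satisfies $\sigma_i + t_i \le 2\C_\opt$, and for tasks where $\tgr$ serializes but $\opt$ parallelizes, one can charge $\sigma_i$ against $\opt$'s corresponding parallel-work cost $\pi_i \ge \sigma_i$ (parallel work being no more efficient than serial in the SPDP).

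The main obstacle, and the part requiring the most care, is avoiding the double-use of $\K_\opt^n$, which appears both in the invariant's bound on $\K_\tgr^n$ and in the argument charging $\tgr$'s excess serial work to $\opt$'s parallel budget. The cleanest resolution I see is to strengthen \cref{lem:invarREPRISE} to track $\K_\tgr^i + \Sigma_i$ jointly, giving an invariant of the form $\K_\tgr^i + \Sigma_i \le (\C_\alg^i - t_*)\, p + \K_\alg^i + \Sigma_\alg^i$ for every instantly-committing $\alg$, where $\Sigma_\alg^i$ is the serial analogue of $\K_\alg^i$. The inductive step mirrors the proof of \cref{lem:invarREPRISE}: the difficult cases invoke the fact that if $\tgr$ parallelizes $\tau_i$ then $\opt(t_i)$ must also parallelize, tightly bounding $\pi_i \le (\C^{t_i} - t_*)\,p$, while the symmetric case where $\tgr$ serializes but $\alg$ parallelizes is absorbed by $\pi_i \ge \sigma_i$. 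Combining the strengthened invariant at $i = n$ (with $\alg = \opt$) with the handling of $R$ should give $\K_\tgr^n + \Sigma + R \le (2\C_\opt - t_*)\,p$, and hence $\C_\tgr \le 2\C_\opt$.
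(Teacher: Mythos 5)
Your skeleton matches the paper's: reduce to the saturated case via \cref{lem:unsatez}, write $(\C_\tgr - t_*)p$ as the leftover work at $t_*$ plus the work taken on tasks arriving at or after $t_*$, and control the new work via the invariant of \cref{lem:invarREPRISE}. Your strengthening of the invariant to also count serial work on new tasks (the $\Sigma$ terms) is sound and is essentially what the paper's application of the lemma implicitly requires; the inductive step goes through as you say, with $\pi_i \ge \sigma_i$ absorbing the case where $\tgr$ serializes and $\alg$ parallelizes.

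The genuine gap is your treatment of the residual $R$. After applying the strengthened invariant with $\alg = \opt$, what you must show is $R + N_\opt \le p\,\C_\opt$, where $N_\opt$ is $\opt$'s total work on tasks arriving at or after $t_*$. Your proposal only charges the processor-time that $\tgr$ has \emph{already spent} on the old serial tasks against the pre-$t_*$ budget $t_*p$; this says nothing about their \emph{remaining} work $R$. And since the only a priori bound on $N_\opt$ is the full post-$t_*$ capacity $p(\C_\opt - t_*)$, your accounting would need $R \le p t_*$, which is false in general: a single large serial task arriving near time $0$ and running alone until $t_*$ (e.g.\ $p=2$ and a task with $\sigma=\pi$, which $\tgr$ serializes) can leave $R$ on the order of $\C_\opt$ while $pt_*$ is negligible. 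The missing idea --- the paper's key final step --- is to compare $R$ against $\opt$'s \emph{own} leftover at $t_*$: every task present just before $t_*$ costs $\opt$ at least $\sigma_i$ work whichever implementation it chooses, and $\opt$ can have completed at most $pt_*$ of that work by time $t_*$, so $R \le W \le p t_* + \B_\opt$, where $W$ is the total serial size of the present tasks and $\B_\opt$ is $\opt$'s leftover work just before $t_*$. Since $\B_\opt + N_\opt \le p(\C_\opt - t_*)$, this yields $R + N_\opt \le p\C_\opt$ and closes the bound; without relating $R$ to $\opt$ in this way, the inequalities you have do not add up to $\C_\tgr \le 2\C_\opt$.
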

\begin{proof}
Recall from \cref{lem:unsatez} that we need only consider the case that $\tgr$
ends saturated, and recall the definition of $t_*$.
For any scheduler $\alg$, let $\B_\alg$ denote the work that $\alg$ has left
immediately before time $t_*$, and let $\K_\alg$ be work that $\alg$ takes on tasks $\tau_i$ with
 $t_i\ge t_*$.
Because $\tgr$ ends saturated, we have 
\[\C_{\tgr} = t_* + (\K_\tgr + \B_\tgr)/p.\]
Applying \cref{lem:invarREPRISE} gives
\begin{equation}
  \label{eq:ctgrppp}
    t_* + (\K_\tgr + \B_\tgr)/p \le \C_\opt + (\K_\opt+\B_\tgr)/p .
\end{equation}
So, to conclude, it suffices to show that $\B_\tgr+\K_\opt\le p\C_\opt$. Let $S$ be the set of tasks that $\tgr$ has present immediately before time $t_*$. Let $W = \sum_{\tau_i \in S}\sigma_i$. Clearly $\B_\tgr \le W$. On the other hand, $\opt$ must take at least $W$ work on the tasks $S$, and can
have made at most $pt_*$ progress on these tasks by time $t_*$. Thus, 
\[ \B_\tgr \le W\le pt_* + \B_\opt. \] 
Therefore, 
\[ \B_\tgr+\K_\opt \le pt_* + \B_\opt+\K_\opt \le p\C_\opt. \] 
Using this in \cref{eq:ctgrppp} gives $\C_\tgr\le 2\C_\opt$.
\end{proof}

\printbibliography
\appendix


\section{Barriers Against Improved Schedulers}\label{appendix}
In this section we show that the schedulers of \cref{sec:eventual} and
\cref{sec:never} are optimal among natural restricted classes of schedulers. 
This highlights what changes must be made to the schedulers in order to have
hopes of achieving better competitive ratios.

First we show that among \defn{non-procrastinating} eventually-committing
schedulers (i.e., eventually-committing schedulers with the property that
whenever tasks are present, they will run at least one task), the scheduler
\cref{alg:eve} is optimal.
\begin{proposition}
Fix $\eps>0$.
Let $\xi \approx 1.677$ denote the real root of the polynomial $2x^3- 3x^2-1$. There is no
deterministic $(\xi-\eps)$-competitive non-procrastinating
eventually-committing scheduler.
\end{proposition}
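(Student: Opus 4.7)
The plan is an adaptive adversary argument. At time $0$ the adversary presents a single task $\tau_1 = (\sigma_1, \pi_1, 0)$ with $\sigma_1 = \xi \pi_1$. Since the scheduler is non-procrastinating and $\tau_1$ is the only task present, the scheduler must immediately begin processing $\tau_1$, irrevocably committing it to either the fast or slow machine. If the scheduler commits $\tau_1$ to a slow machine, the adversary halts: the scheduler's completion time is $\sigma_1 = \xi \pi_1$ while $\opt$ completes in time $\pi_1$ by running $\tau_1$ on the fast machine, giving ratio exactly $\xi$. We may therefore assume the scheduler commits $\tau_1$ to the fast machine.

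In this hard case, the scheduler's fast machine is locked up with $\tau_1$ until time $\pi_1$, while $\opt$ will use a slow machine for $\tau_1$ and keep the fast machine free for future tasks. The adversary now presents further tasks in two more stages, mirroring the three-part decomposition from Case~3 of the analysis of $\eve$ in \cref{sec:eventual}: a ``stuck task'' contribution of size $\pi_s \approx C/\xi^2$ (from $\tau_1$), a ``medium task'' contribution from a task $\tau_2$ that $\opt$ can place on the slow machine (analogous to a $\J'$ task), and a ``late fast task'' contribution from a task $\tau_3$ of type $(\infty, \pi_3, t_3)$ forcing fast-machine use at a time when the scheduler's fast machine is still encumbered. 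The parameters are tuned via the identity $1/\xi^2 = 2\xi - 3$ (equivalent to $2\xi^3 - 3\xi^2 - 1 = 0$) so that every combination of scheduler commitments yields ratio at least $\xi - O(\eps)$.

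The main technical obstacle is the case analysis for the scheduler's response to $\tau_2$. Since the scheduler is only eventually-committing (not instantly-committing), it can defer its decision on $\tau_2$ as long as $\tau_1$ is still running, so the adversary must handle three qualitatively different responses: (i) commit $\tau_2$ to fast, queueing behind $\tau_1$; (ii) commit $\tau_2$ to slow, paying a slow-machine latency of at least $\sigma_2$; or (iii) delay committing $\tau_2$ until $\tau_1$ finishes (or until another task arrives forcing a decision). For each response, the adversary tunes the arrival time and sizes of $\tau_3$ so that the scheduler falls far enough behind $\opt$. The cubic $2\xi^3 = 3\xi^2 + 1$ arises as the unique value at which the three constraints corresponding to these three scheduler responses simultaneously become tight, exactly matching the three summands $1/\xi^2$, $(\sigma_{\min} + R(\sigma_{\min}))/\xi$, and $C - R(\sigma_{\min})$ that appear in the upper-bound analysis of $\eve$.
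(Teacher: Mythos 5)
Your first stage coincides with the paper's construction: a task with $\sigma_1=\xi\pi_1$ (the paper takes $\tau_1=(1/\xi,1/\xi^2,0)$) that non-procrastination plus $(\xi-\eps)$-competitiveness on the truncated TAP forces onto the fast machine, and you have identified the correct identity $1/\xi^2=2\xi-3$. The gap is in your stages two and three: a \emph{single} adaptively-timed fast-only task $\tau_3=(\infty,\pi_3,t_3)$ cannot close the case in which the scheduler simply queues $\tau_2$ on the fast machine behind $\tau_1$. Normalize $\C_\opt=1$; in your design $\opt$ serializes $\tau_2$, so $\sigma_2\le 1$. Up to $O(\eps)$ terms: for starting $\tau_2$ on a slow machine \emph{after} $\tau_3$ arrives to be costly you need $t_3+\sigma_2\ge\xi$, hence $t_3\ge\xi-1$ and $\pi_3\le 1-t_3\le 2-\xi$; for serializing $\tau_2$ \emph{before} $\tau_3$ arrives to be non-competitive on the truncation you need $\sigma_2\ge\xi\,\C^{t_2}$, which forces $\pi_2\le\sigma_2/\xi\le 1/\xi$ (and stage one already gives $\pi_1\le 1/\xi$). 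The scheduler that runs $\tau_1,\tau_2,\tau_3$ back-to-back on the fast machine then finishes by $\max(1,\pi_1+\pi_2+\pi_3)\le\max(1,\,2/\xi+2-\xi)\approx 1.51$, strictly below $\xi$, no matter how you tune $(t_3,\pi_3)$; and if instead $t_3<\xi-1$, the scheduler can safely serialize $\tau_2$ once $\tau_3$ has arrived and again stays below $\xi-\eps$. So the claim that the three constraints ``simultaneously become tight'' at the cubic is exactly the step that fails for a three-task adversary, and it is the step your proposal asserts without verification.

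The paper's construction avoids this by replacing your single $\tau_3$ with a dense stream of tiny fast-only tasks $(\infty,\eps^2,t)$ arriving at every time $t\in[\xi+1/\xi-2,\,1-\eps^2]\cap\N\eps^2$, alongside $\tau_2=(1-\eps^2,\,1/\xi-\eps^2,\,\eps^2)$. The stream does two things no single late task can: it makes the revealed optimum grow only linearly, $\C^t\le\min(1,\,t-\xi+2+\eps^2)$, so that $1-\eps^2+t>(\xi-\eps)\C^t$ holds at \emph{every} time (the critical check is at $t=\xi-1$), meaning the scheduler can never safely start $\tau_2$ on a slow machine and delaying never helps; and it injects $3-\xi-1/\xi-O(\eps^2)$ of unavoidable fast work, so the scheduler's fast machine must process $1/\xi^2+1/\xi+(3-\xi-1/\xi)-O(\eps^2)=\xi-O(\eps^2)$ in total while $\C_\opt\le 1$. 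In particular no case analysis over the scheduler's response to $\tau_2$ is needed: once serialization of $\tau_2$ is unsafe at all times, only the total fast work matters. If you want to salvage your plan, the missing idea is this continuously arriving stream (an adversary that keeps $\C^t$ rising at unit rate), not a cleverer choice of one $\tau_3$.
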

\begin{proof}
It suffices to consider the case that $\eps< .001$.
Fix a non-procrastinating eventually-committing scheduler $\alg$.
Assume towards contradiction that $\alg$ is $(\xi-\eps)$-competitive.
We now describe a TAP $\mc{T}$ on which $\C_\alg \ge \xi\C_\opt$.
The TAP starts with $\tau_1=(1/\xi, 1/\xi^2,0)$.
Next, let $\tau_2=(1-\eps^2, 1/\xi-\eps^2, \eps^2)$.
Finally, at each time $t\in [\xi+1/\xi-2, 1-\eps^2] \cap (\N\eps^2)$, give a task
$\tau=(\infty,\eps^2,t)$. 

We now argue that $\alg$ must run all the tasks on the fast machine. Because
$\alg$ is a non-procrastinating $(\xi-\eps)$-competitive scheduler, $\alg$
must instantly start $\tau_1$ on the fast machine (in case there are no tasks
after $\tau_1$). Now we argue that $\alg$ runs $\tau_2$ on the fast machine as well.
Suppose that $\alg$ starts $\tau_2$ on a slow machine at some time $t$ with 
\begin{equation}
  \label{eq:noway}
1-\eps^2+t > (\xi-\eps)\C^{t}. 
\end{equation}
Then, $\alg$ would not be $(\xi-\eps)$-competitive on the truncated TAP
$\mc{T}^{t}$. Thus, $\alg$ must not start $\tau_2$ on a slow machine at any time
$t$ satisfying \cref{eq:noway}. We now show that \cref{eq:noway} holds for all
$t\ge \eps$, thus proving that $\alg$ must run $\tau_2$ on the fast machine.
For $t\in [\eps^2, \xi+1/\xi-2)$ 
we have $\C^{t}\le 1/\xi$, and $1-\eps^2+t\ge 1$, so \cref{eq:noway} holds.
For $t\ge \xi+1/\xi-2$ we have
\[\C^t \le \min(1, t-\xi+2+\eps^2).\]
Thus, it suffices to show:
\begin{equation}\label{eq:hardhard}
1-\eps^2+t > (\xi-\eps)\min(1, t-\xi+2+\eps^2).
\end{equation}
To show \cref{eq:hardhard} it suffices to check \cref{eq:hardhard} for
$t=\xi-1-\eps^2$ (by monotonicity of the inequality on either side of $t=\xi-1$).
At $t=\xi-1-\eps^2$ \cref{eq:hardhard} is:
\[\xi-2\eps^2 > \xi-\eps,\]
which  is true because $\eps< .001$.

We have now shown that $\alg$ runs all tasks in $\mc{T}$ on the fast machine.
Thus, (by definition of $\xi$)
\[ \C_\alg \ge 1/\xi^2 + 1/\xi-\eps^2 + 1-(1/\xi+\xi-2)-\eps^2 = \xi-2\eps^2. \] 
However,  $\C_\opt \le 1$.
This contradicts the assumption that $\alg$ is $(\xi-\eps)$-competitive.
\end{proof}

Now, we show that the $1.5$-competitive scheduler of \cref{sec:never} is optimal
among never-committing schedulers that don't cancel tasks on slow machines.
\begin{proposition}
Let $\alg$ be a deterministic never-committing scheduler that never cancels serial tasks.
Then, for any $\eps>0$, there is a TAP $\mc{T}$ with $n\le O(1)$ on which $\alg$
has is not $(1.5-\eps)$-competitive.
\end{proposition}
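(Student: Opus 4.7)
The plan is to construct, for each $\eps > 0$, an explicit adaptive TAP $\mc{T}$ of constant size on which the scheduler $\alg$ is forced into a dilemma from which its completion time must be at least $(1.5-\eps)\C_\opt$. The argument follows the same template as \cref{prop:canclb}: the adversary simulates $\alg$'s execution and dispatches further tasks based on $\alg$'s internal state, terminating as soon as $\alg$ makes an irrevocable choice whose cost is not amortized by the remainder of the TAP.

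I would start with a task $\tau_1 = (1, p, 0)$ with $p$ chosen slightly smaller than $2/3$. If $\alg$ ever commits $\tau_1$ to a slow machine (at any time $t^\star \ge 0$), the adversary terminates the TAP: $\alg$'s completion time is at least $t^\star + \sigma_1 \ge 1$, while $\C_\opt = p$, yielding ratio at least $1/p > 1.5$. Thus any $(1.5-\eps)$-competitive $\alg$ must process $\tau_1$ on the fast machine for its entire runtime. This is the key leverage provided by the never-cancel-slow restriction: once $\alg$ commits, the adversary can ``freeze'' the TAP and force the committed slow cost to be paid without benefit.

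Assuming $\alg$ keeps $\tau_1$ on the fast machine, the adversary waits until $\alg$ has accumulated significant fast-machine progress on $\tau_1$, and then dispatches a sequence of additional tasks $\tau_2, \tau_3, \ldots$ with $\sigma_i = \infty$ (so each must go on the fast machine), whose arrival times and $\pi_i$ values are chosen to saturate the fast machine. At each critical moment, the adversary has two contingent responses: if $\alg$ commits any task to a slow machine, the adversary terminates so that the slow-commitment cost cannot be amortized; otherwise, the adversary continues dispatching fast-only tasks, forcing $\alg$ to clear them all sequentially on the fast machine. Meanwhile $\opt$ can always place $\tau_1$ on a slow machine from time $0$ and run the later $\tau_i$'s on fast in parallel, avoiding the contention that $\alg$ cannot escape.

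The main obstacle is parameter tuning: I need to pick $p$, the arrival times $t_i$, the counts, and the values $\pi_i$ of the subsequent tasks so that \emph{every} branch of $\alg$'s decision tree --- committing early, committing late, never committing but rearranging via partial cancellations on the fast machine --- yields completion time at least $(1.5 - \eps)\C_\opt$. Closing off the partial-cancellation branches is the subtlest part, since $\alg$ can in principle discard fast-machine progress on $\tau_1$ at any time; the countermeasure is to ensure enough cumulative $\pi_i$ mass arrives on fast-only tasks that any schedule keeping $\tau_1$ entirely off slow machines takes sequential fast time at least $(1.5-\eps)\C_\opt$. Once such a TAP is constructed, the argument reduces to a finite case analysis structurally identical to \cref{prop:canclb}, but with the branching now driven by the sequence of irrevocable slow-machine commitments that $\alg$ can make.
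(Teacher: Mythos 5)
There is a genuine gap, and it is not just ``parameter tuning.'' Your anchor claim --- that whenever $\alg$ commits $\tau_1=(1,p,0)$ to a slow machine at time $t^\star$ the adversary can stop and charge ratio $1/p$ because $\C_\opt = p$ --- is only valid while the TAP still consists of $\tau_1$ alone. Once your stream of fast-only tasks has started arriving, the truncated TAP at time $t^\star$ contains that stream, and its offline optimum is no longer $p$: the optimum can run $\tau_1$ on the fast machine and the stream afterwards, giving $\C^{t^\star}\le p + W(t^\star)+o(1)$, which grows with $t^\star$. So a \emph{late} slow commitment is far cheaper relative to the truncated optimum than you claim, and this is exactly the branch that constrains the construction. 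Concretely, with a rate-$1$ stream the commitment branch yields ratio about $\frac{\sigma_1+t^\star}{\min(\sigma_1,\;\pi_1+t^\star)}$, minimized at $t^\star=\sigma_1-\pi_1$ with value $2-\pi_1/\sigma_1$, while the never-commit branch yields about $1+\pi_1/\sigma_1$. With your choice $\pi_1/\sigma_1=p\approx 2/3$ the scheduler that runs $\tau_1$ on the fast machine until time $1-p$ and then starts it on a slow machine is only about $4/3$-competitive against every continuation of your adversary, so no tuning of the stream's arrival times or sizes rescues the construction; you must change the first task.

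The paper's proof uses the same template you describe (one big task, then a dense stream of $(\infty,\eps^2,t)$ tasks, terminating the TAP at the moment of a slow commitment), but it chooses $\tau_1=(2,1,0)$, i.e.\ slow-to-fast ratio exactly $2$, which balances the two branches: committing at any $t^\star$ gives $\C_\alg\ge 2+t^\star$ against $\C^{t^\star}\le\min(2,\,1+t^\star+\eps^2)$, worst case $3/(2+\eps^2)$, while never committing forces fast work $\pi_1$ plus the whole stream against $\C_\opt=2$. The balancing $\min\left(2-\tfrac{\pi_1}{\sigma_1},\,1+\tfrac{\pi_1}{\sigma_1}\right)$ is maximized at $\pi_1/\sigma_1=1/2$ with value $3/2$, which is why $1.5$ is exactly what this family of adversaries can extract. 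To repair your argument you would take $\tau_1$ with $\sigma_1=2\pi_1$ and then explicitly verify the commitment branch at \emph{every} time $t^\star$ against the truncated optimum $\C^{t^\star}$ (as in the paper's two cases $t^\star\le 1$ and $t^\star\ge 1$), rather than asserting the single-task value of $\C_\opt$; the never-commit branch is then handled, as you note, by lower-bounding $\C_\alg$ by the total fast work, which also disposes of partial cancellations on the fast machine.
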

\begin{proof}
It suffices to consider the case that $\eps<.001$. The TAP is defined as
follows. First, $\tau_1=(2,1,0)$. Then, for each time $t\in [\eps^2,
1-\eps^2]\cap \N\eps^2$, a task $\tau=(\infty, \eps^2, t)$ arrives. We will show
that if $\alg$ starts $\tau_1$ on a slow machine at any time $t$ then
$\alg$ is not $(1.5-\eps)$-competitive on $\mc{T}^{t}$.
We show this by considering two cases.

\textbf{Case 1}: $\alg$ starts $\tau_1$ on a slow machine at time $t\in [0,1]$. \\
If $\alg$ does this, then $\C_\alg^{\mc{T}^t} \ge 2+t$. However, $\C^t \le t+1+\eps^2$.
Thus, 
\[\C_\alg^{\mc{T}^t}/\C^t \ge \frac{2+t}{t+1+\eps^2} \ge \frac{3}{2+\eps^2} > 1.5-\eps. \]
So $\alg$ cannot start $\tau_1$ on a slow machine at this time.

\textbf{Case 2}: $\alg$ starts $\tau_1$ on a slow machine at time $t\ge 1$.\\
If $\alg$ does this, then $\C_\alg \ge 2+t$. However, $\C_\opt\le 2$.
Thus, 
\[\C_\alg/\C_\opt \ge 1.5.\]

In conclusion, $\alg$ must run $\tau_1$ on the fast machine. But then 
\[\C_\alg\ge 3-\eps^2 > (1.5-\eps)\C_\opt = (1.5-\eps)2,\]
a contradiction.
\end{proof}

\section{Lower Bounds from \cite{ku24}}
In this section we state, for the reader's convenience, the lower bounds from \cite{ku24} against instantly- and eventually- committing schedulers.

\begin{proposition}[Kuszmaul, Westover \cite{ku24}]
Fix $\eps>0$. There is no deterministic $(2-\eps)$-competitive
instantly-committing scheduler.
\end{proposition}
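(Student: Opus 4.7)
My plan is to construct an adaptive adversarial TAP that forces any deterministic instantly-committing scheduler $\alg$ to incur completion time at least $(2-\eps)\C_\opt$ on some prefix. The backbone is a geometrically doubling sequence of tasks, each with serial-to-parallel ratio exactly $2$. The intuition is that $\alg$ must choose between committing too eagerly (wasting a factor of $2$ by routing a fast-friendly task to a slow machine) or too conservatively (accumulating fast-machine work that blocks a genuinely heavy fast task), and the doubling structure makes both failure modes equally costly.

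Concretely, fix an integer $N$ and a small $\delta > 0$, both to be chosen in terms of $\eps$, and define $\mathcal{T} = (\tau_1, \dots, \tau_N)$ with $\tau_i = (2^i, 2^{i-1}, (i-1)\delta)$. Simulating $\alg$ on $\mathcal{T}$, let $j$ be the least index at which $\alg$ commits $\tau_j$ to a slow machine, or $j = N+1$ if no such $j$ exists. The adversarial instance I will analyze is the prefix of $\mathcal{T}$ of length $\min(j, N)$; because $\alg$ is instantly-committing, its decisions on this prefix agree with its decisions on the full $\mathcal{T}$.

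The analysis splits into two cases. If $j \leq N$, then $\alg$ runs $\tau_1,\dots,\tau_{j-1}$ back-to-back on the fast machine (for $\delta$ small, the fast machine stays continuously busy, finishing at $\sum_{i<j} 2^{i-1} = 2^{j-1}-1$) and runs $\tau_j$ on a slow machine for its full $2^j$ duration, so $\C_\alg \approx 2^j$. The offline optimum routes only $\tau_j$ to the fast machine, completing it at $\approx 2^{j-1}$, and sends each smaller $\tau_i$ to its own slow machine, each finishing by $\approx 2^{j-1}$, giving $\C_\opt \approx 2^{j-1}$ and ratio $\approx 2 - O(\delta)$. If $j = N+1$, then $\alg$'s fast machine is busy until $2^N - 1$, so $\C_\alg \approx 2^N$, while the same offline strategy still gives $\C_\opt \approx 2^{N-1}$ and ratio $\approx 2 - 2^{-(N-1)}$. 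Taking $N = \Theta(\log(1/\eps))$ and $\delta = \Theta(\eps/N)$ drives both ratios above $2 - \eps$, contradicting any hypothetical $(2-\eps)$-competitive scheduler.

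The main technical obstacle I anticipate is verifying the claimed $\C_\opt$ on each truncated instance. The lower bound $\C_\opt \geq 2^{j-1}$ is immediate from $\pi_j = 2^{j-1}$, but matching this requires checking that no offline strategy saves time by packing additional tasks onto the fast machine. This reduces to a short case analysis: any schedule placing $\tau_j$ on a slow machine already costs $\sigma_j = 2^j$, and putting any smaller task together with $\tau_j$ on the fast machine only adds to its load, while the smaller tasks fit trivially within $2^{j-1}$ on their own dedicated slow machines.
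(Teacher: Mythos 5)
Your proposal is correct and follows essentially the same route as the paper: the same doubling TAP with $\sigma_i = 2^i$, $\pi_i = 2^{i-1}$ and near-simultaneous arrivals, the same dichotomy (first serialized task gives ratio $\approx 2$ on the corresponding prefix, else all tasks pile up on the fast machine giving ratio $2 - 2^{-(N-1)}$), and the same verification that the prefix optimum is $\approx 2^{i-1}$. Your explicit choice of $j$, $\delta$, and $N$ just spells out details the paper leaves informal.
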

\begin{proof}
Consider an $n$-task TAP where for each $i\in [n]$, the $i$-th task has $\sigma_i
= 2^{i}, \pi_i = 2^{i-1}$, and the arrival times are all very close to $0$. For each $i\in[n]$, it is possible to handle the first $i$ tasks in the TAP with
completion time $2^{i-1}$. Thus, a $(2-\eps)$-competitive scheduler cannot afford to run task $\tau_i$ on a slow machine. So, a $(2-\eps)$-competitive scheduler must run all tasks on the fast machine, giving completion time at least $2^{n}-1$ on this TAP, while $\C_\opt\le 2^{n-1}$. For large enough $n$ this implies that the scheduler is not actually $2-\eps$ competitive.
\end{proof}

\begin{proposition}[Kuszmaul, Westover \cite{ku24}]\label{lb:phi}
Fix $\eps>0$. There is no deterministic $(\phi-\eps)$-competitive eventually-committing scheduler,
where $\phi\approx 1.618$ is the golden ratio.
\end{proposition}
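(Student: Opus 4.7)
The plan is to construct, for any $\eps > 0$, an adaptive two-task TAP on which every deterministic eventually-committing scheduler $\alg$ has ratio strictly greater than $\phi - \eps$. The construction is structurally parallel to \cref{prop:canclb}, with the golden-ratio identity $(1+\phi)/\phi = \phi$ (equivalently $\phi^2 = \phi + 1$) playing the role that the corresponding quadratic identity plays in the never-committing case.

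First, I would set $\tau_1 = (\phi, 1, 0)$, so that on the singleton TAP $(\tau_1)$ we have $\C_\opt = 1$. For $\alg$ to beat ratio $\phi - \eps$ against this singleton it cannot commit $\tau_1$ to a slow machine (which would give completion $\ge \phi$), and it must commit $\tau_1$ to the fast machine by some time $s < 1/\phi - \eps/2$ (otherwise $\C_\alg \ge s + 1 \ge \phi - \eps/2$, already giving ratio $> \phi - \eps$). Both failure modes handle themselves on the singleton TAP, leaving only the case where $\alg$ starts $\tau_1$ on the fast machine at some $s \in [0, 1/\phi - \eps/2)$.

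In this remaining case, at time $s + \delta$ (for a sufficiently small $\delta > 0$) the adversary releases $\tau_2 = (\infty, \phi - s - \delta, s + \delta)$. Since $\sigma_2 = \infty$ forces $\alg$ to place $\tau_2$ on the fast machine, and the fast machine is occupied with $\tau_1$ until time $s + 1$, we obtain $\C_\alg \ge s + 1 + \pi_2 = 1 + \phi - \delta$. Meanwhile, an offline scheduler can run $\tau_1$ on a slow machine (finishing at $\phi$) and $\tau_2$ on the fast machine from its arrival $s+\delta$ to $\phi$, yielding $\C_\opt = \phi$. The ratio $(1 + \phi - \delta)/\phi = \phi - \delta/\phi$ can then be made to strictly exceed $\phi - \eps$ by choosing $\delta < \eps\phi/2$. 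The one technical point to verify is that the adversary's adaptive choice of $\tau_2$'s parameters in terms of $s$ is legitimate --- but because $\alg$ is deterministic, $s$ is a well-defined function of the prefix $(\tau_1)$, so this is harmless. Beyond this, the proof is a clean case split with no intricate calculations beyond the golden-ratio identity.
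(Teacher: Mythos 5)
Your proposal is correct and follows essentially the same construction as the paper's proof: the task $\tau_1=(\phi,1,0)$ forces a $(\phi-\eps)$-competitive scheduler to commit $\tau_1$ to the fast machine early, after which the adversarial task $\tau_2$ with $\sigma_2=\infty$ and fast runtime $\approx\phi-t_0$ yields completion time $\approx\phi^2$ against $\C_\opt=\phi$. The only difference is your $\delta$-delay in releasing $\tau_2$ after the observed start time, a harmless (and slightly more careful) technical refinement of the paper's argument.
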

\begin{proof}
Suppose that $\alg$ is a $(\phi-\eps)$-competitive eventually-committing
scheduler. Let $\tau_1=(\phi,1,0)$; if there are no further tasks, $\alg$ must
run $\tau_1$ on the fast machine, starting at some time $t_0\le 1/\phi$. 
Let $\tau_2 = (\infty,\phi-t_0,t_0)$.
On this TAP, $\C_\opt = \phi$, while $\C_\alg\ge \phi + 1 = \phi^2$. So $\alg$ is not
$(\phi-\eps)$-competitive.
\end{proof}
\section{Randomized Lower Bounds}\label{sec:lower}
In this section we give lower bounds against randomized schedulers. 
Our main tool is Yao's minimax principle \cite{yao97}, which allows us to prove
a lower bound on the competitive ratio by exhibiting a distribution over TAPs,
and showing that any deterministic scheduler has poor expected cost on a random
TAP drawn from the distribution.

\begin{proposition}
For any $\eps>0.03$, there is no $(5/3-\eps)$-competitive DoA scheduler, even
with randomization.
\end{proposition}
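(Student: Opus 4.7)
The plan is to apply Yao's minimax principle \cite{yao97}: to prove this randomized lower bound, it suffices to exhibit a distribution $\mu$ over TAPs such that every deterministic instantly-committing scheduler $\alg$ satisfies $\E_{\mc{T}\sim\mu}[\C_\alg(\mc{T})/\C_\opt(\mc{T})] \ge 5/3 - \eps$ whenever $\eps > 0.03$.

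My first candidate for $\mu$ is supported on prefixes of a common geometric task sequence $\tau_i = (2^i, 2^{i-1}, i\delta)$ (for a tiny $\delta$) of some fixed length $n$: let $N \in [n]$ be a random truncation index drawn from a distribution $(p_1, \ldots, p_n)$, and let the TAP be $\mc{T}_N = (\tau_1, \ldots, \tau_N)$. Two clean structural observations then drive the analysis. First, $\C_\opt(\mc{T}_k) = 2^{k-1}$, achievable by $\opt$ running $\tau_k$ on the fast machine and each earlier task on its own slow machine. Second, since all tasks look identical at their arrival time apart from their index, a deterministic instantly-committing scheduler is specified by a single binary decision vector $(d_1, \ldots, d_n) \in \{\mathsf{s}, \mathsf{f}\}^n$, with
\[ \C_\alg(\mc{T}_k) = \max\!\Big(\!\max_{j \le k,\, d_j = \mathsf{s}} 2^j,\ \sum_{j \le k,\, d_j = \mathsf{f}} 2^{j-1}\Big). \]
From here, I would pose the adversary's choice of $p$ as a linear program --- maximize over $p \in \Delta_n$ the minimum over the $2^n$ deterministic strategies of $\sum_k p_k \C_\alg(\mc{T}_k)/2^{k-1}$ --- and solve it. I expect the tight alg strategies to be ``all fast'' together with ``all fast except one slow at some position $j$''; equating the expected ratios of these strategies pins down the optimal $p_j$.

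The main obstacle is verifying that this LP attains value at least $5/3 - 0.03$. Naive geometric doubling with ratio $2$ appears to give a max-min that grows only slowly in $n$ (exploratory calculations yield $4/3$ at $n=2$, $16/11$ at $n=3$, and $88/59 \approx 1.49$ at $n=4$, with apparent limit below $5/3$). Thus to cross the threshold $5/3 - 0.03 \approx 1.633$, one likely needs a different task geometry: perhaps varying the ratio $\sigma_i/\pi_i$ across tasks, introducing ``forcing'' tasks of the form $(\infty, c, t)$ in the style of the deterministic lower bounds in \cref{lb:phi}, or supporting $\mu$ on a non-prefix family of TAPs that branch from a common early history. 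The precise slack $\eps > 0.03$ in the statement strongly suggests a carefully tuned finite construction rather than a clean asymptotic one; finding this construction and executing the LP analysis is the technical heart of the proof.
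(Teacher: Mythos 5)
There is a genuine gap: your write-up sets up exactly the right framework (Yao's principle applied to prefixes of the doubling TAP $\sigma_i = 2\pi_i$, $\pi_i = 2^{i-1}$, with a truncation index drawn from some distribution), which is precisely the paper's construction, but you then abandon that construction based on a faulty extrapolation and never produce a distribution achieving the bound. Your small-$n$ LP values ($4/3$, $16/11$, $88/59$) are correct, but they do not indicate a limit near $1.5$: they are depressed by boundary effects (the $-2$ and $2^{1-k}$ terms in the fast sums, and the fact that on a short horizon the binding algorithm strategies really are ``all fast'' plus ``one slow''). For larger $n$ the binding strategies become multi-slow patterns and the adversary's optimum shifts toward (nearly) uniform weight on long prefixes, and the value keeps climbing. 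Concretely, under the uniform distribution on prefixes $k \in [N]$, a decision vector's ratio on prefix $k$ is $2$ if $k$ is slow and roughly $2 - \sum_{\text{slow } j < k} 2^{j-k}$ if $k$ is fast; the harvested savings per fast position is less than $1$, and optimizing the long-run average over slow/fast patterns gives the alternating pattern with average $(2 + 4/3)/2 = 5/3$. So the plain uniform distribution on this very TAP family forces every deterministic instantly-committing scheduler to expected ratio $5/3 - O(1/N)$, which already crosses $5/3 - 0.03 \approx 1.6367$ at moderate $N$.

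That is exactly the paper's proof: it fixes $N = 25$, takes the uniform distribution over the prefixes $\mc{T}_1, \dots, \mc{T}_{25}$ of the doubling TAP ($\sigma_i = 2^{i+1}$, $\pi_i = 2^i$), and verifies by brute-force enumeration of the $2^{25}$ decision vectors that none has expected ratio below $1.637$. So no new ``task geometry,'' forcing tasks, or branching family is needed --- the missing step in your proposal is simply to push the horizon to a moderate $n$ (where your conjectured support of tight strategies no longer holds) and certify the resulting value computationally or via the alternating-pattern analysis above, rather than concluding from $n \le 4$ that the construction cannot work.
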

\begin{proof}
Fix $N=25$.
For $k\in\N$, define $\mc{T}_k$ to be a length $k$ TAP with $\sigma_i=2^{i+1}, \pi_i=2^{i}$.
Let $\mathcal{D}$ denote the following distribution over TAPs: choose $k\in [N]$
uniformly randomly, and then output TAP $\mathcal{T}_k$.
By brute force enumeration of all possible deterministic
instantly-committing strategies, one can show that no such strategy is
$1.637$-competitive on this TAP.
\end{proof}

\begin{proposition}
For any $\eps>0$, there is no $((1+\sqrt{3})/4-\eps)$-competitive
eventually-committing scheduler, even with randomization.
\end{proposition}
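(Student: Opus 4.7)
The plan is to apply Yao's minimax principle, reducing the randomized lower bound to constructing a distribution $\mu$ over TAPs against which every deterministic eventually-committing scheduler has large expected competitive ratio. Following the structure of \cref{prop:canclb}, I would take $\mu$ supported on two TAPs: the short TAP $\mc{T}' = \big((1,2\psi,0)\big)$ and the long TAP $\mc{T} = \big((1,2\psi,0), (\infty, 1-\psi, \psi)\big)$, with $\psi = (\sqrt{3}-1)/2$, sampled with probabilities $p$ and $1-p$ to be optimized. The two TAPs are indistinguishable to any online scheduler before time $\psi$, so any deterministic scheduler must take the same actions on both until then, and the optimal completion times are $\C_\opt^{\mc{T}'} = 2\psi$ and $\C_\opt^{\mc{T}} = 1$.

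The heart of the argument is to show that, modulo dominated options, a deterministic eventually-committing scheduler has only two useful pre-$\psi$ strategies: (F) commit $\tau_1$ to the fast machine at time $0$, yielding completion times $(2\psi, 1+\psi)$ on $(\mc{T}', \mc{T})$ (since on $\mc{T}$ the fast machine must serve $\pi_1+\pi_2 = 1+\psi$ units of work); or (S) commit $\tau_1$ to a slow machine at time $0$, yielding $(1,1)$. Strategies that delay committing $\tau_1$ past time $\psi$ give completion time at least $3\psi$ on $\mc{T}'$, producing ratio $3/2$ and being dominated by (F); strategies that commit at some time $t \in (0, \psi)$ to either machine are strict perturbations of (F) or (S) that do no better on either TAP. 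With the set of candidate strategies pinned down to a $2\times 2$ game, I would choose $p$ to equalize the expected ratio of (F) and (S) under $\mu$, then read off the Yao lower bound from the resulting saddle point.

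The main obstacle I anticipate is carefully handling the ``wait-and-see'' strategies that the eventually-committing model allows but the instantly-committing model does not: $\alg$ may defer committing $\tau_1$ all the way until time $\psi$, at which point it can observe whether $\tau_2$ has arrived and choose a machine accordingly. I expect that the completion-time penalty of delaying $\tau_1$ by $\psi$ before starting it is enough to render this adaptive strategy strictly dominated by pure (F) on $\mc{T}'$ and not better on $\mc{T}$, but verifying this cleanly---and ruling out all other hybrid schemes combining partial delay with later commitment---is the main technical work that distinguishes this proof from the never-committing lower bound of \cref{prop:canclb}.
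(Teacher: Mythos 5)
Your plan is essentially the paper's own proof: the paper likewise applies Yao's principle to a (uniform) random choice between the two TAPs $\mc{T}'$ and $\mc{T}$ from \cref{prop:canclb}, leaving the case analysis over deterministic eventually-committing schedulers as a routine verification, and the wait-and-see strategies you flag are handled exactly as you anticipate (delaying $\tau_1$ past time $\psi$ forces ratio at least $1.5$ on $\mc{T}'$ while doing no better than completion time $1+\psi$ on $\mc{T}$, and commitments at intermediate times are dominated by (F) or (S)). Note that your equalization indeed lands at $p=1/2$ and gives the stronger bound $(3+\sqrt{3})/4\approx 1.18$, which implies the stated constant $(1+\sqrt{3})/4<1$ a fortiori.
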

\begin{proof}
In the proof of \cref{prop:canclb} we defined two TAPs, and showed that no
deterministic eventually-committing scheduler is
$((1+\sqrt{3})/2-\eps)$-competitive on both of the TAPs.
One can show that if we randomly choose between the two TAPs of
\cref{prop:canclb}, there is no deterministic eventually-committing scheduler
with expected competitive ratio $(1+\sqrt{3})/4-\eps$.
\end{proof}


\end{document}